\pgfplotsset{compat=1.17}
\titlespacing*{\section}{0pt}{10pt plus 2pt minus 2pt}{2pt plus 2pt minus 2pt}
\titlespacing*{\subsection}{0pt}{5pt plus 1pt minus 1pt}{2pt plus 1pt minus 1pt}
\titlespacing*{\subsubsection}{0pt}{2pt plus 1pt minus 1pt}{1pt plus 1pt minus 1pt}
\newtheorem{proposition}{Proposition}
\DeclareAcronym{mle}{
short=MLE,
long= maximum likelihood estimation,
}
\DeclareAcronym{ils}{
short=ILS,
long= iterative least squares,
}
\DeclareAcronym{ml}{
short=ML,
long= maximum likelihood,
}
\DeclareAcronym{ap}{
short=AP,
long= access point,
}
\DeclareAcronym{nllf}{
short=NLLF,
long= negative log-likelihood function,
}
\DeclareAcronym{ue}{
short=UE,
long= user equipment,
}
\DeclareAcronym{iq}{
short=IQ,
long= in-phase and quadrature,
}
\DeclareAcronym{awgn}{
short=AWGN,
long= additive white Gaussian noise,
}
\DeclareAcronym{psd}{
short=PSD,
long= power spectral density,
}
\DeclareAcronym{lo}{
short=LO,
long= local oscillator,
}
\DeclareAcronym{mimo}{
short=MIMO,
long= multiple-input multiple-output,
}
\DeclareAcronym{mmimo}{
short=mMIMO,
long= massive multiple-input multiple-output,
}
\DeclareAcronym{toa}{
short=ToA,
long= time-of-arrival,
}
\DeclareAcronym{mse}{
short=MSE,
long= mean-squared error,
}
\DeclareAcronym{rmse}{
short=RMSE,
long= root-mean-squared error,
}
\DeclareAcronym{tdoa}{
short=TDoA,
long= time-difference-of-arrival,
}
\DeclareAcronym{gdop}{
short=GDOP,
long= geometric dilution of precision,
}
\DeclareAcronym{ofdm}{
short=OFDM,
long= orthogonal frequency division multiplexing,
}
\DeclareAcronym{ota}{
short=OtA,
long= over-the-air,
}
\DeclareAcronym{fr-c}{
short=FR-calibrated,
long= fully receive-calibrated,
}
\DeclareAcronym{ft-c}{
short=FT-calibrated,
long= fully transmit-calibrated,
}
\DeclareAcronym{f-c}{
short=F-calibrated,
long= fully calibrated,
}
\DeclareAcronym{los}{
short= LoS,
long=line-of-sight,
}
\DeclareAcronym{r-c}{
short=R-calibrated,
long= reciprocity calibrated,
}
\DeclareAcronym{crlb}{
short=CRLB,
long= Cramér-Rao lower bound,
}
\DeclareAcronym{fim}{
short=FIM,
long= Fisher information matrix,
}
\DeclareAcronym{aoa}{
short=AoA,
long= angle-of-arrival,
}
\DeclareAcronym{aod}{
short=AoD,
long= angle-of-departure,
}
\begin{document}

\title{Location and Map-Assisted Wideband Phase and Time Calibration Between Distributed Antennas\\
\thanks{This work was supported by the Swedish Foundation for Strategic Research (SSF) under grant no. ID19-0021,  the Swedish Research Council under VR grants 2022-03007 and 2023-03414, ELLIIT, the KAW Foundation, and the Spanish Ministry of Science under grant PID2023-152820OB-I00. This work was also conducted within the Advanced Digitalization program at the WiTECH Centre DISCOURSE, financed by VINNOVA and partner companies. 
  }
}
\author{Yibo~Wu\IEEEauthorrefmark{1}\IEEEauthorrefmark{2},
Musa~Furkan~Keskin\IEEEauthorrefmark{2},
    Ulf~Gustavsson\IEEEauthorrefmark{1},\\
Gonzalo Seco-Granados\IEEEauthorrefmark{3},
Erik G. Larsson\IEEEauthorrefmark{4},
and Henk~Wymeersch\IEEEauthorrefmark{2}\\
        \IEEEauthorrefmark{1}Ericsson Research, Gothenburg, Sweden, 
        \IEEEauthorrefmark{2}
        Chalmers University of Technology, Gothenburg, Sweden,\\
		\IEEEauthorrefmark{3}
Universitat Autonoma de Barcelona, Barcelona,
Spain, 
\IEEEauthorrefmark{4}
Link\"{o}ping University, Link\"{o}ping, Sweden
        }
\maketitle
\thispagestyle{empty}

\begin{abstract}
Distributed \acl{mmimo} networks utilize a large number of distributed access points (APs) to serve multiple user equipments (UEs), offering significant potential for both communication and localization. However, these networks require frequent phase and time calibration between distributed antennas due to oscillator phase drifts, crucial for reciprocity-based coherent beamforming and accurate localization. While this calibration is typically performed through bi-directional measurements between antennas, it can be simplified to uni-directional measurement under perfect knowledge of antenna locations. This paper extends a recent phase calibration narrowband \ac{los} model to a phase and time calibration wideband \acl{ofdm} model, including both LoS and reflection paths and allowing for joint phase and time calibrations. We explore different scenarios, considering whether or not prior knowledge of antenna locations and the map is available. For each case, we introduce a practical \acl{ml} estimator and conduct \ac{crlb} analyses to benchmark performance. Simulations validate our estimators against the \ac{crlb} in these scenarios.
\end{abstract}
\begin{IEEEkeywords}
Time Calibration, phase calibration, cell-free massive MIMO, distributed antenna, reciprocity calibration, carrier phase positioning, delay estimation
\end{IEEEkeywords}
\acresetall 
\section{Introduction}
Distributed \ac{mmimo} is a promising technology for next-generation
communication systems, where numerous distributed \acp{ap} serve
multiple \acp{ue} with uniform
service~\cite{cell_free_book,ngo2017cell}. However, it requires
precise phase and time synchronization between \acp{ap}, critical for uplink combining, 
downlink MIMO beamforming, and   
localization~\cite{cell_free_book,ngo2017cell,sync_dist_2024,
  fascista2023uplink}. Bi-directional \ac{ota} measurements between the
antennas of different APs can be used to calibrate uplink-downlink reciprocity
phase errors for reciprocity-based
beamforming~\cite{vieira2017reciprocity}. Unless the \acp{lo}
in different APs are mutually locked to each other, the phase drift
originating from \ac{lo} noise requires frequent (millisecond-level)
re-calibration
\cite{xu2024distributed,nissel2022correctly,larsson2023phase}. Such
calibration can also be accomplished using bi-directional \ac{ota}
measurements \emph{between the
APs}~\cite{larsson2023phase,sync_dist_2024}.


While~\cite{vieira2017reciprocity,nissel2022correctly,larsson2023phase,sync_dist_2024,xu2024distributed}
focus on phase calibration issues from hardware impairments and
asynchronous timing, time calibration is also crucial for delay-based
high-accuracy localization~\cite{fan2021carrier,fascista2023uplink,wymeersch2023fundamental}. Joint
phase and time calibration, especially addressing errors from
imperfect \acp{lo} between distributed antennas, remains
underexplored.  The paper~\cite{sync_dist_2024} offers a phase
calibration method for narrowband transmission over \ac{los} channels but does not address time calibration, wideband systems, and multi-path effects (which commonly arise from ground reflections~\cite{jaeckel2017explicit}). Additionally, the potential
benefits of using location (AP positions) and map information
(reflection points and phases) in phase and time calibration are not
systematically explored. Specifically, knowing the AP locations reduces the need for
calibration from bi-directional to uni-directional
measurements, and knowing the map information could enhance calibration
accuracy. These research gaps motivate our work.

\begin{figure}[t]
    \centering
    \input{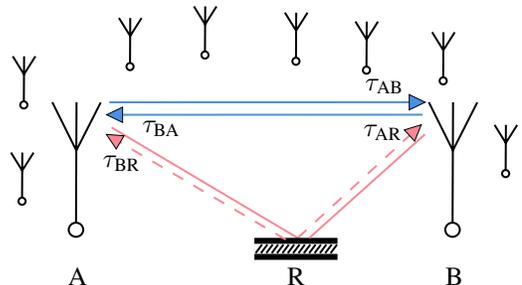}
    \caption{Phase and time calibration between two asynchronous distributed single-antenna APs in a distributed massive MIMO network, A and B, via LoS and ground reflection paths (reflection point R). Depending on the prior knowledge of the location (AP positions) and map (reflection point position and phase) information, bi-directional or uni-directional measurements are needed. }
    \label{fig:sys_model_AB}
\end{figure}

In this paper, we introduce a novel phase and time calibration signal
model for wideband distributed antenna systems, enabling joint
phase and time calibration for precise calibration and accurate
localization. We investigate various scenarios, considering the
availability of location (AP positions) and/or map information
(reflection position and phase), over both \ac{los} and reflection
paths. We derive \acf{ml}-based estimators for each scenario to solve
the calibration problem. Additionally, we provide a rigorous \ac{crlb}
analysis using Fisher information, establishing a theoretical
benchmark for performance evaluation. By examining a setup with two
single-antenna APs, we emphasize that our model and techniques
apply also to calibration between antennas of multi-antenna APs and between
APs and UEs in distributed \ac{mmimo} scenarios, broadening the potential
impact of our work.


\section{System Model}\label{section:sys_model_observ_model}
\subsection{Phase and Clock Offsets}\label{section:sys_model_phase_time_def}
Two single-antenna \acp{ap} A and B are shown in Fig.~\ref{fig:sys_model_AB}. Each \ac{ap}, A or B, has a transmit and receive branch, causing respective time and phase offsets~\cite{nissel2022correctly}. We assume that each \ac{ap} has already individually calibrated its transmit and receive antenna times and phases.\footnote{This assumption is valid as the antenna individual calibration can be done using electromagnetic simulations or anechoic chamber measurements~\cite{sevgi2008antenna}, or \ac{ota} methods~\cite{jiang2018framework,cao2023experimental}. More details on antenna individual calibration definition are in~\cite{larsson2023phase}.} The rest of this paper focuses on phase and time calibration between different antennas.

The propagation delay between A and B is 
\begin{align}
     \tau_{\text{AB}} &={\| \boldsymbol{p}_{\text{A}} - \boldsymbol{p}_{\text{B}}\|}/{c}, \label{eq:delay_define}
\end{align}
where $\boldsymbol{p}_{\text{A}}$ and $\boldsymbol{p}_{\text{B}}$ are the 2-D positions of A and B, and $c$ is the speed of light. Note $\tau_{\text{AB}} = \tau_{\text{BA}}$ due to reciprocity. 
At global time zero, the clock offset for the transmission from A to B is $\delta_{t_{\text{AB}}} \in \mathbb{R}$, and from B to A is $\delta_{t_{\text{BA}}}\in \mathbb{R}$.\footnote{For simplicity, we assume that APs A and B transmit simultaneously at their local time zero, respectively. In practice, A and B can transmit at different local time, provided the time difference is pre-determined, similar to the assumption in the round-trip time protocol.} The additional phase offset transmitting from A to B and from B to A due to hardware impairments (e.g.,  \ac{iq} imbalance or common phase error due to phase noise) is denoted by $\delta_{\phi^{ }_{{\text{AB}}}} \in [0,2\pi)$ and $\delta_{\phi^{ }_{{\text{BA}}}} \in [0,2\pi)$. We note that
\begin{align}
    \delta_{t_{\text{AB}}} &= -\delta_{t_{\text{BA}}}, \label{eq:delta_t_AB_BA}\\
    \delta_{\phi_{\text{AB}}^{}} &= -\delta_{\phi_{\text{BA}}^{}}. \label{eq:delta_phi_AB_BA}
\end{align}
These minus signs arise because a positive transmit time delay shift has the opposite effect on the signal phase compared to a positive receive time delay shift, due to opposite transmission directions. See~\cite{nissel2022correctly,larsson2023phase} for more details. 

\vspace{-1mm}
\subsection{Observation Model via LoS Path}
We consider an OFDM system with subcarrier spacing $\Delta_f$ Hz and $N$ subcarriers. The bandwidth is $W=N\Delta_f$. The \ac{awgn} noise \ac{psd} is $N_0$. The transmit power is $P_{\text{tx}}$. The complex pilot symbol vector at one OFDM symbol, transmitted by AP $A$, is $\boldsymbol{s}_{\text{A}} = [s_{\text{A}}[0], \cdots, s_{\text{A}}[N-1]]^{\mathsf{T}} \in \mathbb{C}^{N}$. The average symbol energy $E_s = \mathbb{E}\{|{s}_{\text{A}}[n]|^2\} = P_{\text{tx}}/W$.

Under an ideal \ac{los} channel, the received symbol at AP B over subcarrier $n$, after filtering, sampling, and cyclic prefix removal, is given by\footnote{The model is equivalent to the model in~\cite[eq. (20)]{nissel2022correctly} by setting $\delta_{\phi^{ }_{{\text{AB}}}}=-(\varphi_t-\varphi_r)$ and $\tau_{\text{AB}}=\tau_t-\tau_r$. }
\begin{align}
    y_{\text{AB}}[n]
    &= \beta_{\text{AB}} e^{-j \delta_{\phi^{ }_{{\text{AB}}}} } { e^{-j2\pi f_c {\tilde{\tau}_{\text{AB}}}}}  e^{-j2\pi n \Delta_f{\tilde{\tau}_{\text{AB}}}}   s_{\text{A}}[n] + w_{\text{AB}}[n] \notag \\
    &= \beta_{\text{AB}} e^{j {\varphi_{\text{AB}}}}
      {a(\tilde{\tau}_{\text{AB}})[n]} s_{\text{A}}[n] + w_{\text{AB}}[n],
    \label{eq:y_A_B}
\end{align}
where $\beta_{\text{AB}} \in \mathbb{R}$ is channel gain including path loss, $w_{\text{AB}}[n] \sim \mathcal{N}_{\mathbb{C}}(0,N_0)$ is \ac{awgn}.  The delay $\tau_{\text{AB}}$, pseudo-delay $\tilde{\tau}_{\text{AB}}$ (including the clock offset), carrier phase $\varphi_{\text{AB}}$, and delay steering vector element $[\boldsymbol{a}(\tilde{\tau}_{\text{AB}})]_n$, are 
\begin{align}
\tilde{\tau}_{\text{AB}} &= {\tau}_{\text{AB}} + \delta_{t_{\text{AB}}},  \\
    \varphi_{\text{AB}} & = -2\pi f_c \tilde{\tau}_{\text{AB}} -\delta_{\phi^{ }_{{\text{AB}}}}, \label{eq:carrier_phase_define} \\
    [\boldsymbol{a}(\tilde{\tau}_{\text{AB}})]_n &=e^{-j2\pi n \Delta_f \tilde{\tau}_{\text{AB}}}.
\end{align}
where the delay steering vector $\boldsymbol{a}(\tilde{\tau}_{\text{AB}}) \triangleq [[\boldsymbol{a}(\tilde{\tau}_{\text{AB}})]_0,\cdots,[\boldsymbol{a}(\tilde{\tau}_{\text{AB}})]_{N-1}]^{\mathsf{T}} \in \mathbb{C}^{N}$. 
Collecting $y_{\text{AB}}[n]$ over $N$ subcarriers gives $\boldsymbol{y}_{\text{AB}} = [y_{\text{AB}}[0],\cdots,y_{\text{AB}}[{N-1}]]^{\mathsf{T}}$. We obtain~\eqref{eq:y_A_B} in vector form as
\begin{align}
    \boldsymbol{y}_{\text{AB}} 
    =&{\boldsymbol{\mu}_{\text{AB}}} +\boldsymbol{w}_{\text{AB}},
    \label{eq:y_A_B_vector}
\end{align}
where $\boldsymbol{\mu}_{\text{AB}}\triangleq {\beta_{\text{AB}} {e^{j{\varphi_{\text{AB}}}}}  \boldsymbol{a}(\tilde{\tau}_{\text{AB}}) \odot \boldsymbol{s}_{\text{A}}}$, and $\boldsymbol{w}_{\text{AB}}=[{w}_{\text{AB}}[0],\cdots,{w}_{\text{AB}}[N-1]]^{\mathsf{T}}$. 

Similar to~\eqref{eq:y_A_B_vector}, we define the received signals at AP A, transmitted from AP B, as
\begin{align}
    \boldsymbol{y}_{\text{BA}} 
    =&{\boldsymbol{\mu}_{\text{BA}}} +\boldsymbol{w}_{\text{BA}}, \label{eq:y_B_A_vector}
\end{align}
where $\boldsymbol{\mu}_{\text{BA}} \triangleq {\beta_{\text{BA}} {e^{j{\varphi_{\text{BA}}}}}  \boldsymbol{a}(\tilde{\tau}_{\text{BA}}) \odot \boldsymbol{s}_{\text{B}}}$. The pseudo delay $\tilde{\tau}_{\text{BA}}$ and carrier phase $\varphi_{\text{BA}}$ are 
\begin{align}
\tilde{\tau}_{\text{BA}} &= {\tau}_{\text{BA}} + \delta_{t_{\text{BA}}},\\
 \varphi_{\text{BA}}&= -2\pi f_c \tilde{\tau}_{\text{BA}} - \delta_{\phi_{\text{BA}}^{}} 
\end{align}
Note that $\beta_{\text{AB}}=\beta_{\text{BA}}$ and $\tau_{\text{AB}}=\tau_{\text{BA}}$ due to reciprocity.

\subsection{Observation Model via LoS and Reflection Paths}
In practice, a pure LoS path from array A to B is not always guaranteed due to multi-path effects like ground reflections. The observation from array A to B, involving a LoS path and a reflection path reflected at a reflection $R$ is represented by
\begin{align}
    \boldsymbol{y}_{\text{AB}} 
    &={\boldsymbol{\mu}_{\text{AB}}} +\boldsymbol{\mu}_\text{AR}+\boldsymbol{w}_{\text{AB}}, \label{eq:y_AB_1R}
\end{align}
where ${\boldsymbol{\mu}_\text{AR}}\triangleq{\beta_\text{AR} {e^{j{\varphi_\text{AR}}}}  \boldsymbol{a}(\tilde{\tau}_\text{AR}) \odot \boldsymbol{s}_{\text{A}}}$. The pseudo-delay, $\tilde{\tau}_{\text{AR}}$, and carrier phase, $\varphi_\text{AR}$, of the reflection path are defined as
\begin{align}
 \tilde{\tau}_{\text{AR}}&= {\tau}_\text{AR} + \delta_{t_{\text{AB}}},\\
  \varphi_\text{AR}&= -2\pi f_c \tilde{\tau}_\text{AR} - \delta_{\phi_\text{AR}^{ }} - \delta_{\phi^{ }_{\text{AB}}}.
\end{align}
Here the reflection path delay $\tau_{\text{AR}}$ is  given by $\tau_{\text{AR}} = ({||\boldsymbol{p}_\text{A} - \boldsymbol{p}_{\text{R}} || + ||\boldsymbol{p}_{\text{R}} - \boldsymbol{p}_\text{B} ||})/{c}$, $\boldsymbol{p}_{\text{R}}$ is the reflection location, and the reflection introduces an unknown phase rotation $\delta_{\phi_\text{AR}^{ }}$.

Similarly, we can rewrite the observations from AP B to A~\eqref{eq:y_A_B_vector}, via a LoS and reflection paths, as
\begin{align}
    \boldsymbol{y}_{\text{BA}} &= {\boldsymbol{\mu}_{\text{BA}}} + {\boldsymbol{\mu}_\text{BR}} + \boldsymbol{w}_{\text{BA}}, \label{eq:y_BA_1R}
\end{align}
where ${\boldsymbol{\mu}_\text{BR}}={\beta_\text{BR} {e^{j{\varphi_\text{BR}}}}  \boldsymbol{a}(\tilde{\tau}_\text{BR}) \odot \boldsymbol{s}_{\text{B}}}$, and the pseudo-delay and carrier phase of the reflection path transmitted from AP B are defined as
\begin{align}
 \tilde{\tau}_\text{BR}&= {\tau}_\text{BR} + \delta_{t_{\text{BA}}} \\
  \varphi_\text{BR}&= -2\pi f_c \tilde{\tau}_\text{BR} - \delta_{\phi_\text{BR}}^{ } - \delta_{\phi^{ }_{\text{BA}}}.
\end{align}
Note that $\beta_\text{BR}=\beta_\text{AR}$, the phase rotation $\delta_{\phi_{\text{BR}}}^{ }=\delta_{\phi_\text{AR}^{ }}$, and ${\tau}_\text{BR}={\tau}_\text{AR}$ due to reciprocity~\cite{zhang2020effects}.

\subsection{Problem Formulation}\label{section:prob_form}
The task is to jointly calibrate phase and time between two single-antenna APs using uni-directional (known AP positions) or bi-directional (unknown AP positions) \ac{ota} observations. Two scenarios are considered based on the knowledge of AP positions.   

\begin{itemize}[noitemsep, topsep=0pt,leftmargin=*]
    \item \textbf{Known AP positions}: With known positions of APs A and B, we can determine the delay $\tau_{\text{AB}}$, which helps to calibrate the phase and clock offsets using uni-directional observation over LoS and reflection paths (if the paths are resolvable), e.g., $\boldsymbol{y}_{\text{AB}}$ from~\eqref{eq:y_AB_1R}. With \textit{unknown map}, i.e., unknown $\tau_{\text{AR}}$ or $\delta_{\phi_\text{AR}^{ }}$, and a uni-directional \textit{two-path} observation~\eqref{eq:y_AB_1R}, the unknown parameter vector is defined as 
\begin{align}
    \boldsymbol{\eta}_{}=[\delta_{t_{\text{AB}}}, \delta_{\phi_{\text{AB}}^{}}, \tau_{\text{AR}}, \delta_{\phi_\text{AR}^{ }},\beta_{\text{AB}}, \beta_\text{AR}
    ]^{\mathsf{T}} \in \mathbb{R}^{6 \times 1} ~.\label{eq:unknownPar_known_positions}
\end{align}
If the reflection position $\boldsymbol{p}_{\text{R}}$ or the reflection rotation phase $\delta_{\phi_\text{AR}^{ }}$ is known, i.e., \textit{known map}, the unknown vector $\boldsymbol{\eta}$ in~\eqref{eq:unknownPar_known_positions} is reduced by excluding $\tau_{\text{AR}}$ or $\delta_{\phi_\text{AR}^{ }}$, respectively. 

With only a \textit{LoS} path observation, the unknown parameter vector in~\eqref{eq:unknownPar_known_positions} is reduced to
\begin{align}
    \boldsymbol{\eta}_{}=[\delta_{t_{\text{AB}}}, \delta_{\phi_{\text{AB}}^{}}, \beta_{\text{AB}}
    ]^{\mathsf{T}} \in \mathbb{R}^{3 \times 1}. \label{eq:unknownPar_known_positions_LoS}
\end{align}

\item \textbf{Unknown AP positions}: Without knowing the positions of APs A and B, a uni-directional observation, e.g., $\boldsymbol{y}_{\text{AB}}$ from~\eqref{eq:y_AB_1R}, is insufficient to independently estimate both $\tau_{\text{AB}}$ and $\delta_{t_{\text{AB}}}$, because there are more unknowns ($6$ in \eqref{eq:unknownPar_known_positions} and $\tau_{\text{AB}}$) than measurable parameters ($ \varphi_{\text{AB}}, \varphi_{\text{AR}}, \tilde{\tau}_{\text{AB}}, \tilde{\tau}_{\text{AR}}, \beta_{\text{AB}},\beta_{\text{AR}}$). Instead, bi-directional observations $\boldsymbol{y}_{\text{AB}}$ and $\boldsymbol{y}_{\text{BA}}$ are needed. With \textit{unknown map} and bi-directional \textit{two-path} observations, the unknown parameter vector is defined as
\begin{align}
    \boldsymbol{\eta}_{}=[\tau_{\text{AB}},\delta_{t_{\text{AB}}}, \delta_{\phi_{\text{AB}}^{}}, \tau_{\text{AR}}, \delta_{\phi_\text{AR}^{ }},\beta_{\text{AB}}, \beta_\text{AR}
    ]^{\mathsf{T}} \in \mathbb{R}^{7 \times 1} ~. \label{eq:unknownPar_unknown_positions}
\end{align}
If the reflection information $\tau_{\text{AR}}$ and $\delta_{\phi_{\text{AR}}}$ are known, i.e., \textit{known map}, the unknown vector $\boldsymbol{\eta}$ in~\eqref{eq:unknownPar_unknown_positions} is reduced by excluding $\tau_{\text{AR}}$ or $\delta_{\phi_\text{AR}^{ }}$, respectively. 

With only \textit{LoS} path observations, i.e., using observations $\boldsymbol{y}_{\text{AB}}^{\mathsf{}}$ from~\eqref{eq:y_A_B_vector} and $\boldsymbol{y}_{\text{BA}}^{\mathsf{}}$~\eqref{eq:y_B_A_vector}, the unknown parameter vector is 
\begin{align}
    \boldsymbol{\eta}_{}=[\tau_{\text{AB}}, \delta_{t_{\text{AB}}}, \delta_{\phi_{\text{AB}}^{}}, \beta_{\text{AB}}
    ]^{\mathsf{T}} \in \mathbb{R}^{4 \times 1}. \label{eq:unknownPar_unknown_positions_LoS}
\end{align} 
\end{itemize}

\section{Phase and Time Calibration}
This section introduces four novel \ac{ml}-based estimators for the phase and time calibration tasks~\eqref{eq:unknownPar_known_positions},~\eqref{eq:unknownPar_known_positions_LoS}, \eqref{eq:unknownPar_unknown_positions}, and~\eqref{eq:unknownPar_unknown_positions_LoS} in Section~\ref{section:prob_form}. These estimators share a similar form but differ in their final grid search dimensions.

\begin{proposition}
The \ac{ml}-based estimator of the unknown parameters $\boldsymbol{\eta}$ in \eqref{eq:unknownPar_known_positions}, \eqref{eq:unknownPar_known_positions_LoS}, \eqref{eq:unknownPar_unknown_positions}, and \eqref{eq:unknownPar_unknown_positions_LoS} are respectively given by 
\begin{align}
    \hspace{-1cm}[\hat{\delta}_{t_{\text{AB}}}, \hat{\tau}_{\text{AR}}, \hat{\delta}_{\phi_\text{AR}^{ }}] &=  \arg \underset{\substack{\delta_{t_{\text{AB}}}, \tau_{\text{AR}}, \\ \delta_{\phi_\text{AR}^{ }}}}{\min} \mathcal{L}_{\text{2-path}}^{\text{Uni}} (\delta_{t_{\text{AB}}}, \tau_{\text{AR}}, \delta_{\phi_\text{AR}^{ }}), 
   \label{eq:1way_2path_Estimator} \\
   \hat{\delta}_{t_{\text{AB}}} &=  \arg \underset{\delta_{t_{\text{AB}}}}{\min } \mathcal{L}_{\text{LoS}}^{\text{Uni}} (\delta_{t_{\text{AB}}}), 
   \label{eq:1way_LoS_Estimator} \\
 [\hat{\tau}_{\text{AB}}, \hat{\delta}_{t_{\text{AB}}},\hat{\tau}_{\text{AR}}, \hat{\delta}_{\phi_\text{AR}^{ }}] &=  \arg \underset{\substack{{\tau}_{\text{AB}}, \delta_{t_{\text{AB}}},\\ {\tau}_{\text{AR}}, {\delta}_{\phi_\text{AR}^{ }}}}{\min} \mathcal{L}_{\text{2-path}}^{\text{Bi}} ({\tau}_{\text{AB}}, \delta_{t_{\text{AB}}},{\tau}_{\text{AR}}, {\delta}_{\phi_\text{AR}^{ }}),\label{eq:2way_2path_Estimator}  \\
 [\hat{\tau}_{\text{AB}}, \hat{\delta}_{t_{\text{AB}}}] &=  \arg \underset{{\tau}_{\text{AB}}, \delta_{t_{\text{AB}}}}{\min} \mathcal{L}_{\text{LoS}}^{\text{Bi}} ({\tau}_{\text{AB}}, \delta_{t_{\text{AB}}}),
 \label{eq:2way_LoS_Estimator} 
\end{align}
where \eqref{eq:1way_2path_Estimator} and \eqref{eq:1way_LoS_Estimator} are the two-path and LoS estimators using the uni-directional observation $\boldsymbol{y}_{\text{AB}}$ from~\eqref{eq:y_AB_1R} and~\eqref{eq:y_A_B_vector}, respectively.  \eqref{eq:2way_2path_Estimator} and \eqref{eq:2way_LoS_Estimator} are the two-path and LoS estimators using the bi-directional observations $\boldsymbol{y}_{\text{}} =[\boldsymbol{y}_{\text{AB}}^{\mathsf{T}}, \boldsymbol{y}_{\text{BA}}^{\mathsf{H}}]^{\mathsf{T}}$, obtained from~\eqref{eq:y_AB_1R} and~\eqref{eq:y_BA_1R}, and from~\eqref{eq:y_A_B_vector} and~\eqref{eq:y_B_A_vector}, respectively.
The \acp{nllf} $\mathcal{L}_{\text{2-path}}^{\text{Uni}}$, $\mathcal{L}_{\text{LoS}}^{\text{uni}}$, $\mathcal{L}_{\text{2-path}}^{\text{Bi}}$, and $\mathcal{L}_{\text{2-path}}^{\text{Bi}}$  share a similar form as
\begin{align}
    \mathcal{L}_{\text{ML}}(\cdot) = &\left\| \ddot{\boldsymbol{y}}_{\text{}}\right\|^2_2 + \left\| \ddot{\boldsymbol{c}}_{\text{}}\right\|^2_2
   - 2 \left| \ddot{\boldsymbol{c}}_{\text{}}^{\mathsf{H}} \ddot{\boldsymbol{y}}_{\text{}}\right|,  \label{eq:NLLF_general}
\end{align}
where, based on the specific estimator, $\mathcal{L}_{\text{ML}} =\{\mathcal{L}_{\text{2-path}}^{\text{Uni}}, \mathcal{L}_{\text{LoS}}^{\text{Uni}}, \mathcal{L}_{\text{2-path}}^{\text{Bi}}, \mathcal{L}_{\text{LoS}}^{\text{Bi}} \}$, $\ddot{\boldsymbol{y}}$ and $\ddot{\boldsymbol{c}}$ in \eqref{eq:NLLF_general} are substituted accordingly to obtain the respective NLLF functions in \eqref{eq:2Path_est_knownPos}, \eqref{eq:NLLF_1way_LoS_Estimator}, \eqref{eq:2Path_est_unknownPos}, and \eqref{eq:NLLF_1Path_est_unknownPos}. 

\end{proposition}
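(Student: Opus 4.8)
The plan is to treat all four cases uniformly by exploiting that every observation model in Section~\ref{section:sys_model_observ_model} is a standard circularly-symmetric complex Gaussian model $\boldsymbol{y}=\boldsymbol{\mu}(\boldsymbol{\eta})+\boldsymbol{w}$ with $\boldsymbol{w}\sim\mathcal{N}_{\mathbb{C}}(\boldsymbol{0},N_0\boldsymbol{I})$. Dropping the parameter-independent additive constant and the positive scaling $1/N_0$, the \ac{nllf} is therefore $\|\boldsymbol{y}-\boldsymbol{\mu}(\boldsymbol{\eta})\|_2^2$. The first step is to isolate the hardware phase offset: from \eqref{eq:carrier_phase_define} and its reflection-path counterpart, the unit-modulus scalar $e^{-j\delta_{\phi_{\text{AB}}}}$ multiplies both $\boldsymbol{\mu}_{\text{AB}}$ and $\boldsymbol{\mu}_{\text{AR}}$ \emph{with the same sign}, so for the uni-directional observation \eqref{eq:y_AB_1R} (and, dropping the reflection term, \eqref{eq:y_A_B_vector}) the mean factors as $\boldsymbol{\mu}=e^{-j\delta_{\phi_{\text{AB}}}}\ddot{\boldsymbol{c}}$, where $\ddot{\boldsymbol{c}}$ collects the gains $\beta$, the steering vectors $\boldsymbol{a}(\cdot)$ together with the carrier-phase factors $e^{-j2\pi f_c\tilde{\tau}}$, and --- in the two-path case --- the reflection phase $\delta_{\phi_{\text{AR}}}$ kept as a \emph{separate} parameter inside $\ddot{\boldsymbol{c}}$. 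Setting $\ddot{\boldsymbol{y}}=\boldsymbol{y}_{\text{AB}}$ completes the reduction for these two cases.

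For the bi-directional estimators the crucial manipulation is that the stacked observation is built from $\boldsymbol{y}_{\text{AB}}$ and the \emph{entry-wise conjugate} of $\boldsymbol{y}_{\text{BA}}$ (precisely the block $\boldsymbol{y}_{\text{BA}}^{\mathsf{H}}$ appearing in $\boldsymbol{y}=[\boldsymbol{y}_{\text{AB}}^{\mathsf{T}},\boldsymbol{y}_{\text{BA}}^{\mathsf{H}}]^{\mathsf{T}}$). Using the sign relations \eqref{eq:delta_t_AB_BA}--\eqref{eq:delta_phi_AB_BA} and reciprocity ($\beta_{\text{AB}}=\beta_{\text{BA}}$, $\tau_{\text{AB}}=\tau_{\text{BA}}$, and the analogous reflection-path identities), conjugation flips the signs of $\delta_{t_{\text{BA}}}$ and $\delta_{\phi_{\text{BA}}}$, so that after conjugation the $\boldsymbol{y}_{\text{BA}}$ block is governed by exactly the same $\delta_{\phi_{\text{AB}}},\delta_{t_{\text{AB}}},\tau_{\text{AB}},\tau_{\text{AR}},\delta_{\phi_{\text{AR}}}$ as the $\boldsymbol{y}_{\text{AB}}$ block; in particular the common scalar $e^{-j\delta_{\phi_{\text{AB}}}}$ again factors out of the \emph{whole} stacked mean. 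One also checks that entry-wise conjugation maps a circularly-symmetric white Gaussian vector to another with the same covariance and that $\boldsymbol{w}_{\text{AB}}$, $\boldsymbol{w}_{\text{BA}}$ are independent, so the stacked noise is still $\mathcal{N}_{\mathbb{C}}(\boldsymbol{0},N_0\boldsymbol{I})$ and the squared-residual form of the \ac{nllf} persists, with $\ddot{\boldsymbol{y}}$ the stacked observation and $\ddot{\boldsymbol{c}}$ the correspondingly stacked template (now additionally carrying $\tau_{\text{AB}}$ as an unknown, consistent with \eqref{eq:unknownPar_unknown_positions} and \eqref{eq:unknownPar_unknown_positions_LoS}).

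Once the factorization $\boldsymbol{\mu}=e^{-j\delta_{\phi_{\text{AB}}}}\ddot{\boldsymbol{c}}$ is available in every case, the remaining argument is routine. Expanding $\|\ddot{\boldsymbol{y}}-e^{-j\delta_{\phi_{\text{AB}}}}\ddot{\boldsymbol{c}}\|_2^2=\|\ddot{\boldsymbol{y}}\|_2^2+\|\ddot{\boldsymbol{c}}\|_2^2-2\,\mathrm{Re}\{e^{j\delta_{\phi_{\text{AB}}}}\ddot{\boldsymbol{c}}^{\mathsf{H}}\ddot{\boldsymbol{y}}\}$ and minimizing over the free scalar $\delta_{\phi_{\text{AB}}}\in[0,2\pi)$ gives the minimizer $\hat{\delta}_{\phi_{\text{AB}}}=-\arg(\ddot{\boldsymbol{c}}^{\mathsf{H}}\ddot{\boldsymbol{y}})$, which replaces the cross term by $-2|\ddot{\boldsymbol{c}}^{\mathsf{H}}\ddot{\boldsymbol{y}}|$ and yields precisely \eqref{eq:NLLF_general}. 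Since each gain $\beta$ enters $\ddot{\boldsymbol{c}}$ linearly, the channel gains can likewise be concentrated out in closed form (a scalar ratio per path in the LoS estimators, a small linear/one-dimensional sub-problem in the two-path estimators), so that only the delay and clock offsets and the reflection phase remain for the explicit minimizations in \eqref{eq:1way_2path_Estimator}--\eqref{eq:2way_LoS_Estimator}; this is what makes the four estimators share the form \eqref{eq:NLLF_general} while differing only in their grid-search dimension. Substituting the case-specific $\ddot{\boldsymbol{y}}$ and $\ddot{\boldsymbol{c}}$ then recovers the explicit \acp{nllf} named in the proposition.

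The step I expect to be the main obstacle is the bi-directional bookkeeping in the second paragraph: one must verify carefully that the entry-wise conjugation of $\boldsymbol{y}_{\text{BA}}$, combined with \eqref{eq:delta_t_AB_BA}--\eqref{eq:delta_phi_AB_BA} and the reciprocity identities, exposes a \emph{single} common phase factor $e^{-j\delta_{\phi_{\text{AB}}}}$ across both blocks without accidentally merging $\delta_{\phi_{\text{AR}}}$ into $\delta_{\phi_{\text{AB}}}$ or breaking the sign consistency of the pseudo-delay terms, and that the conjugation leaves the noise white and circular so the reduction to a single squared-residual cost stays valid. The subsequent phase profiling and gain elimination are elementary.
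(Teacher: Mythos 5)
Your overall route is the paper's route: write the Gaussian \ac{nllf} as a squared residual, factor the common hardware phase $e^{-j\delta_{\phi_{\text{AB}}}}$ out of every path (and, for the bi-directional case, out of the conjugated $\boldsymbol{y}_{\text{BA}}$ block via \eqref{eq:delta_t_AB_BA}--\eqref{eq:delta_phi_AB_BA} and reciprocity), concentrate out the real gains and the phase in closed form, and land on \eqref{eq:NLLF_general} over the remaining parameters. Your bi-directional bookkeeping --- the part you flag as the main obstacle --- is correct and matches Appendix~\ref{sec:Appendix_TwoPath_unknown}.

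The one place your argument does not go through as stated is the \emph{order} of concentration in the two-path cases. The paper eliminates the real gain vector $\boldsymbol{\beta}=[\beta_{\text{AB}},\beta_{\text{AR}}]^{\mathsf{T}}$ first, via a real-constrained least squares ($\Re\{\cdot\}$ in \eqref{eq:ML_est_beta_vector_Bidirect}), and only then profiles $\delta_{\phi_{\text{AB}}}$; this is what produces the $e^{-2j\delta_{\phi_{\text{AB}}}}$ in \eqref{eq:obj_func_compres_beta_Bidirect}, the $z_1\pi$ ambiguity in \eqref{eq:est_pha_off_add_Bidrect}, and the explicit projected quantities $\breve{\boldsymbol{y}},\breve{\boldsymbol{c}}$ in \eqref{eq:2Path_est_knownPos} and \eqref{eq:2Path_est_unknownPos}. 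You profile the phase first, which turns the cross term into $-2|\boldsymbol{\beta}^{\mathsf{T}}\boldsymbol{C}^{\mathsf{H}}\boldsymbol{y}|$; the subsequent minimization over the two real gains of $\boldsymbol{\beta}^{\mathsf{T}}\Re\{\boldsymbol{C}^{\mathsf{H}}\boldsymbol{C}\}\boldsymbol{\beta}-2|\boldsymbol{\beta}^{\mathsf{T}}\boldsymbol{z}|$ is then \emph{not} the ``small linear sub-problem'' you claim (the absolute value couples the two gains non-quadratically), even though it is fine in the single-path cases. Since the joint minimum is order-independent, your final concentrated cost is the same function of the remaining parameters, but to actually exhibit the closed forms named in the proposition you should swap to the paper's order: real LS on $\boldsymbol{\beta}$ first, then the scalar phase.
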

\begin{proof}
    The detailed derivation of \eqref{eq:2way_2path_Estimator} is given in Appendix~\ref{sec:Appendix_TwoPath_unknown}. The derivation of other three estimators follows similar steps, given in Appendix~\ref{sec:Appendix_Other3Estimators}.
\end{proof}

Analytically solving \eqref{eq:1way_2path_Estimator}, \eqref{eq:1way_LoS_Estimator}, \eqref{eq:2way_2path_Estimator}, and \eqref{eq:2way_LoS_Estimator} is infeasible. Thus, a practical approach is to perform a grid search over the 3-D, 1-D, 4-D, and 2-D parameter spaces for \eqref{eq:1way_2path_Estimator}, \eqref{eq:1way_LoS_Estimator}, \eqref{eq:2way_2path_Estimator}, and \eqref{eq:2way_LoS_Estimator}, respectively.

\section{Cramér-Rao Lower Bound and Analysis}
Next, we derive the \ac{crlb} for parameters in~\eqref{eq:unknownPar_known_positions},~\eqref{eq:unknownPar_known_positions_LoS}, \eqref{eq:unknownPar_unknown_positions},  and~\eqref{eq:unknownPar_unknown_positions_LoS}, respectively. The \ac{crlb} on the error variance of any unbiased estimator of the unknown parameters $\boldsymbol{\eta}$ is defined as
\begin{align}
    \mathbb{E}_{\boldsymbol{\eta}}\{(\hat{\boldsymbol{\eta}}-\boldsymbol{\eta}) (\hat{\boldsymbol{\eta}}-\boldsymbol{\eta})^{\mathsf{T}}
    \}\geq (\boldsymbol{J}_{\boldsymbol{\eta}_{\text{}}})^{-1}. \label{eq:crlb_var_definition}
\end{align}
The \ac{fim} of $\boldsymbol{\eta}$ can be calculated as
\begin{align}
[\boldsymbol{J}_{\boldsymbol{\eta}_{\text{}}}]_{i,j} &= 2\Re\Big\{\frac{\partial \boldsymbol{\mu}_{}^{\mathsf{H}}}{\partial [\boldsymbol{\eta}_{\text{}}]_i } \frac{\partial \boldsymbol{\mu}_{}^{\mathsf{}}}{\partial [\boldsymbol{\eta}_{\text{}}]_j} \Big\} \frac{1}{N_0}, \label{eq:FIM_Compute}
\end{align}
where $\boldsymbol{\mu}$ is chosen as  $[\boldsymbol{\mu}_{\text{AB}}^{\mathsf{T}}+\boldsymbol{\mu}_\text{AR}^{\mathsf{T}}]^{\mathsf{T}}$, $\boldsymbol{\mu}_{\text{AB}}$, and $[\boldsymbol{\mu}_{\text{AB}}^{\mathsf{T}},\boldsymbol{\mu}_\text{BA}^{\mathsf{H}}]^{\mathsf{T}}$ for the unknown parameters $\boldsymbol{\eta}$ from~\eqref{eq:unknownPar_known_positions},~\eqref{eq:unknownPar_known_positions_LoS}, and~\eqref{eq:unknownPar_unknown_positions_LoS}, respectively. 
Using the \ac{fim} for an unknown parameter vector $\boldsymbol{\eta}$, we calculate the lower bound on the error variance for the $i$-th parameter as
\begin{align}
     \text{var}(\hat{\eta}_{i}) \geq [\boldsymbol{J}_{\boldsymbol{\eta}_{\text{}}}^{-1}]_{i,i}. \label{eq:each_error_bound_general}
\end{align}

\subsection{Known Positions with LoS Path or Two-path}
\subsubsection{LoS Path}
In the scenario with known AP positions and uni-directional observation~\eqref{eq:y_A_B_vector} over LoS path, the analytical \acp{crlb} of the error variance of the unknown $\hat{\delta}_{t_{\text{AB}}}$ and $\hat{\delta}_{\phi^{ }_{\text{AB}}}$ from~\eqref{eq:unknownPar_known_positions_LoS} are 
\begin{align}
   \text{var}(\hat{\delta}_{t_{\text{AB}}})&= [\boldsymbol{J}_{\boldsymbol{\eta}_{\text{}}}^{-1}]_{1,1} = 3/(2\pi^2 W^2 \text{SNR}) \label{eq:CRB_1Path_delta_tau_AB}, \\
\text{var}(\hat{\delta}_{\phi^{ }_{\text{AB}}})&= [\boldsymbol{J}_{\boldsymbol{\eta}_{\text{}}}^{-1}]_{2,2} = 6f_c^2/(W^2\text{SNR}) + 1/(2\text{SNR}),
\label{eq:CRB_1Path_delta_phi_AB}
\end{align}
where the $\text{SNR} \triangleq E_s \beta^2N/N_0$. We omit the derivation details for simplicity. The CRLB for $\hat{\delta}_{t_{\text{AB}}}$~\eqref{eq:CRB_1Path_delta_tau_AB} shows that increasing the bandwidth can reduce the estimation error to zero. In contrast, the CRLB for $\hat{\delta}_{\phi^{ }_{\text{AB}}}$~\eqref{eq:CRB_1Path_delta_phi_AB} indicates a performance threshold. Referring to Eq.~\eqref{eq:carrier_phase_define},  this occurs because, with perfectly known $\delta_{t_{\text{AB}}}$, the error in the estimation of $\delta_{\phi^{ }_{{\text{AB}}}}$ is independent of bandwidth and only depends on SNR.

\subsubsection{Two-path}
In the scenario with known AP positions and uni-directional observation~\eqref{eq:y_AB_1R} over two paths (LoS and reflection), the \acp{crlb} of the error variance of unknown parameters in $\boldsymbol{\eta}$ from~\eqref{eq:unknownPar_known_positions} is given in Appendix~\ref{appdices: CRLB_1}. 

\subsection{Unknown Positions with LoS Path or Two-path} \label{section:CRLB_unknownPos}
\subsubsection{LoS Path}
The analytical CRLBs of the error variance of $\hat{\tau}_{\text{AB}}$, $\hat{\delta}_{t_{\text{AB}}}$, and $\hat{\delta}_{\phi^{ }_{\text{AB}}}$, from~\eqref{eq:unknownPar_unknown_positions_LoS}, are
\begin{align}
\text{var}(\hat{\tau}_{\text{AB}})&\geq [\boldsymbol{J}_{\boldsymbol{\eta}_{\text{}}}^{-1}]_{1,1} = 1/\big((16f_c^2 +{4 W^2}/{3}  )\pi^2\text{SNR}\big) \label{eq:CRB_1Path_2Obv_tau_AB} \\
   \text{var}(\hat{\delta}_{t_{\text{AB}}})&\geq [\boldsymbol{J}_{\boldsymbol{\eta}_{\text{}}}^{-1}]_{2,2} = 3/(4\pi^2 W^2 \text{SNR}), \label{eq:CRB_1Path_2Obv_delta_tau_AB} \\
\text{var}(\hat{\delta}_{\phi^{ }_{\text{AB}}})&\geq [\boldsymbol{J}_{\boldsymbol{\eta}_{\text{}}}^{-1}]_{2,2} = 3f_c^2/(W^2\text{SNR}) + 1/(4\text{SNR}). \label{eq:CRB_1Path_2Obv_delta_phi_AB}
\end{align}
We omit the derivation details for simplicity.
Comparing~\eqref{eq:CRB_1Path_2Obv_delta_tau_AB} and~\eqref{eq:CRB_1Path_2Obv_delta_phi_AB} with the uni-directional CRLBs,~\eqref{eq:CRB_1Path_delta_tau_AB} and~\eqref{eq:CRB_1Path_delta_phi_AB}, bi-directional observations halve the estimation variances of ${\delta}_{t_{\text{AB}}}$ and ${\delta}_{\phi^{ }_{\text{AB}}}$. Specifically,~\eqref{eq:CRB_1Path_2Obv_tau_AB} shows that the estimation variance of ${\tau}_{\text{AB}}$ decreases with $f_c^2$, unlike $\delta_{t_{\text{AB}}}$ and $\delta_{\phi^{ }_{{\text{AB}}}}$. This is because ${\delta}_{t_{\text{AB}}}$ and $\delta_{\phi^{ }_{{\text{AB}}}}$ can somehow be canceled out due to their opposite signs in the carrier phases of bi-directional measurements: i.e., $-{\delta}_{\phi^{ }_{\text{AB}}}$ in $\varphi_{\text{AB}}$ and $+ {\delta}_{\phi^{ }_{\text{AB}}}$ in $\varphi_{\text{BA}}$. The same applies to ${\delta}_{t^{ }_{\text{AB}}}$. This in turn improves the estimation of $\tau_{\text{AB}}$ due to its same signs in $\varphi_{\text{AB}}$ and $\varphi_{\text{BA}}$.
\subsubsection{Two-path}
In the scenario with unknown AP positions and bi-directional observation (\eqref{eq:y_AB_1R} and \eqref{eq:y_BA_1R}) over two paths (LoS and reflection), the \acp{crlb} of the error variance of unknown parameters in $\boldsymbol{\eta}$ from~\eqref{eq:unknownPar_unknown_positions} is given in Appendix~\ref{appdices: CRLB_2}.
\newcolumntype{Y}{>{\centering\arraybackslash}X}
\newcolumntype{C}[1]{>{\centering\arraybackslash}p{#1}}
\begin{table}[t]
\centering
\caption{Simulation parameters.}
\label{tab:dl_params}
\begin{tabularx}{0.91\linewidth}{p{0.24\linewidth}|C{0.65\linewidth}}
\hline
\textbf{Parameter} & \textbf{Value} \\ \hline
$\boldsymbol{p}_{\text{A}},\boldsymbol{p}_{\text{B}},\boldsymbol{p}_{\text{R}}$ & $[50,50],[0,0],[0,-10]$ m \\ \hline
$P_\text{tx}$, \(N_0 \), $f_c$, $\Delta_f$ & $10$ mW, $-174$ dBm/Hz, $2$ GHz, $60$ kHz \\ \hline
$\delta_{t_{\text{AB}}}$, \(\delta_{\phi_{\text{AB}}^{}}\), $\delta_{\phi_\text{AR}^{ }}$ & $0.67 \mu$s, $10^\circ$, $20^\circ$  \\ \hline
\shortstack{$\beta_{\text{AB}}$,  $\beta_{\text{AR}}$} & $\lambda/({4 \pi \left\|\boldsymbol{p}_\text{A} - \boldsymbol{p}_\text{B}\right\|})$, $\lambda/({4 \pi \left\| |\boldsymbol{p}_\text{A} - \boldsymbol{p}_\text{R}| + |\boldsymbol{p}_\text{R}-\boldsymbol{p}_\text{B}|\right\|})$ \\ \hline
\end{tabularx} \label{tab:Scenario_par}
\end{table}
\begin{table}[t]
\centering
\caption{Setups of two scenarios and the corresponding estimators.}
\label{tab:simulation_params}
\begin{tabularx}{\linewidth}{p{0.3\linewidth}|Y|Y}
\hline
\textbf{Setups} & \textbf{Scenario 1} & \textbf{Scenario 2} \\ \hline
Prior Localization & Known $\boldsymbol{p}_{\text{A}}$, $\boldsymbol{p}_{\text{B}}$ & Unknown $\boldsymbol{p}_{\text{A}}$, $\boldsymbol{p}_{\text{B}}$\\ \hline
Unknown parameters $\boldsymbol{\eta}$ & Two-path: \eqref{eq:unknownPar_known_positions}, \hspace{0.3cm} LoS: \eqref{eq:unknownPar_known_positions_LoS}
   &Two-path:\eqref{eq:unknownPar_unknown_positions}, LoS:\eqref{eq:unknownPar_unknown_positions_LoS}  \\ \hline
Prior Map in two-path & Reduced $\boldsymbol{\eta}$ \eqref{eq:unknownPar_known_positions}  & Reduced $\boldsymbol{\eta}$ \eqref{eq:unknownPar_unknown_positions}  \\ \hline
Estimators & Two-path est.:\eqref{eq:1way_2path_Estimator}, LoS est.:\eqref{eq:1way_LoS_Estimator} &  LoS est.:\eqref{eq:2way_LoS_Estimator}\\ \hline
\end{tabularx} \label{tab:scen_details}
\end{table}
\vspace{-4mm}
\section{Simulation Results}
\subsection{Scenarios}
We consider a simulation setup involving two single-antenna APs. The setup parameters, following similar parameters in~\cite{fascista2023uplink}, are given in Table~\ref{tab:Scenario_par}.
The evaluated scenarios and estimators are given in Table~\ref{tab:scen_details}. The \ac{ml} estimator~\eqref{eq:2way_2path_Estimator} is not evaluated due to its impractical and costly 4-D grid search.

\subsection{Scenario 1: Calibration with Known AP Positions}
\begin{figure}[t]
    \centering
    \input{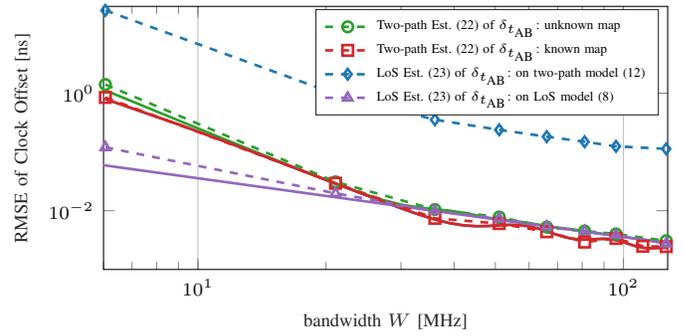}
    \caption{RMSE of the clock offset estimate $\delta_{t_{\text{AB}}}$ in ns versus the bandwidth for various estimators for the scenario with known AP positions. Solid lines are the corresponding CRLBs. }
    \label{fig:ClockOff_vs_BW_knownPos}
    \vspace{-0.3cm}
\end{figure}
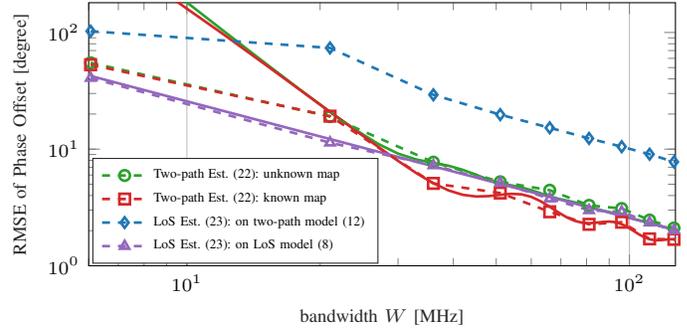
\begin{figure}[t]
    \centering
    \begin{tikzpicture}[font=\scriptsize]
\definecolor{color2}{rgb}{0.12156862745098,0.466666666666667,0.705882352941177}
\definecolor{color0}{rgb}{1,0.498039215686275,0.0549019607843137}
\definecolor{color1}{rgb}{0.172549019607843,0.627450980392157,0.172549019607843}
\definecolor{color3}{rgb}{0.83921568627451,0.152941176470588,0.156862745098039}
\definecolor{color4}{rgb}{0.580392156862745,0.403921568627451,0.741176470588235}
\definecolor{color5}{rgb}{0.549019607843137,0.337254901960784,0.294117647058824}
\definecolor{color6}{rgb}{0.890196078431372,0.466666666666667,0.76078431372549}
\definecolor{color7}{rgb}{0.737254901960784,0.741176470588235,0.133333333333333}

\begin{axis}[%
width=7.8cm,
height=3.5 cm,
at={(0,0)},
scale only axis,
xmode=log,
xmin=6,
xmax=127.06,
xlabel style={font=\color{white!15!black}},
xlabel={\scriptsize bandwidth $W$ [MHz] },
ymode=log,
ymin=1,
ymax=180,
ylabel style={font=\color{white!15!black}},
ylabel={\scriptsize RMSE of Phase Offset [degree]},
axis background/.style={fill=white},
title style={font=\bfseries},
xmajorgrids,
legend style={at={(0.25,0.41)}, font=\tiny,anchor=north, legend cell align=left, draw=white!15!black},
legend columns=1,
]

\addplot [color=color1, line width=1.0pt, forget plot]
  table[row sep=crcr]{%
6.06  803.807484617477\\
9.06  242.71091634502\\
12.06  104.167248768759\\
15.06  54.3303254781388\\
18.06  32.1436558766646\\
21.06  20.8545007524681\\
24.06  14.6265101700315\\
27.06  11.0827599546318\\
30.06  9.12449772057189\\
33.06  8.13205446549055\\
36.06  7.62622180765562\\
39.06  7.19781766720414\\
42.06  6.63911214151283\\
45.06  6.01521719961128\\
48.06  5.48668037821005\\
51.06  5.12628636409749\\
54.06  4.89223381868108\\
57.06  4.68796575426597\\
60.06  4.44680193659436\\
63.06  4.17957207816283\\
66.06  3.93986277748013\\
69.06  3.76030600542634\\
72.06  3.62819872594257\\
75.06  3.50517049430266\\
78.06  3.36503680234655\\
81.06  3.21473545018542\\
84.06  3.07927615736055\\
87.06  2.97308447915461\\
90.06  2.88823274364633\\
93.06  2.8051463978296\\
96.06  2.71241085458404\\
99.06  2.61595921454061\\
102.06  2.52959545335758\\
105.06  2.45976758496585\\
108.06  2.40034202642097\\
111.06  2.34005085431785\\
114.06  2.27391497673075\\
117.06  2.20699306347218\\
120.06  2.14753218242083\\
123.06  2.09814926861498\\
126.06  2.05390636506006\\
129.06  2.00795445329896\\
132.06  1.9584275191318\\
135.06  1.90951995367592\\
138.06  1.86630704779927\\
141.06  1.82945629247809\\
144.06  1.7950031557511\\
147.06  1.75872724051946\\
150.06  1.7203437272721\\
};

\addplot [color=color3, line width=1.0pt, forget plot]
  table[row sep=crcr]{%
6.06  572.535615966193\\
9.06  208.192078409949\\
12.06  97.185235096122\\
15.06  52.8607147109602\\
18.06  31.9428623122642\\
21.06  20.8521142363403\\
24.06  14.4327526428264\\
27.06  10.4603444198824\\
30.06  7.89157150183657\\
33.06  6.19989672794677\\
36.06  5.09811632776707\\
39.06  4.41860679829764\\
42.06  4.05441197409359\\
45.06  3.9248460193052\\
48.06  3.95420370342059\\
51.06  4.05784604303883\\
54.06  4.1367451643763\\
57.06  4.09624724707402\\
60.06  3.89426876344251\\
63.06  3.56950359050916\\
66.06  3.20503054864732\\
69.06  2.87279065566138\\
72.06  2.61234027986547\\
75.06  2.4368779534788\\
78.06  2.34341768092841\\
81.06  2.31842473617541\\
84.06  2.33957279015978\\
87.06  2.3764587202262\\
90.06  2.39420932313258\\
93.06  2.36373524318824\\
96.06  2.27539805111076\\
99.06  2.14365250695385\\
102.06  1.99666763683137\\
105.06  1.8614492263156\\
108.06  1.75582092470612\\
111.06  1.68760335729355\\
114.06  1.65639506268564\\
117.06  1.65515076915428\\
120.06  1.67122071955335\\
123.06  1.68789650988409\\
126.06  1.68804316582429\\
129.06  1.6601164143471\\
132.06  1.60337521087885\\
135.06  1.5276900941981\\
138.06  1.44790754719813\\
141.06  1.37756230973355\\
144.06  1.32550297323444\\
147.06  1.29524349588718\\
150.06  1.28544300271531\\
};

\addplot [color=color4, line width=1.0pt, forget plot]
  table[row sep=crcr]{%
6.06  42.4088323795783\\
9.06  28.36541768353\\
12.06  21.309154125973\\
15.06  17.064241410911\\
18.06  14.2296301373535\\
21.06  12.2026143958931\\
24.06  10.6810963280148\\
27.06  9.49694829330229\\
30.06  8.54916033699014\\
33.06  7.77338711241003\\
36.06  7.12669621535839\\
39.06  6.57934501388375\\
42.06  6.11007669948678\\
45.06  5.70329542718743\\
48.06  5.34729946316572\\
51.06  5.03313719838667\\
54.06  4.75384406905673\\
57.06  4.50392019837012\\
60.06  4.27896458892388\\
63.06  4.07541373310687\\
66.06  3.89035144676163\\
69.06  3.72136826859791\\
72.06  3.56645598067834\\
75.06  3.42392742352332\\
78.06  3.29235480166018\\
81.06  3.17052168856096\\
84.06  3.05738530915408\\
87.06  2.95204662043993\\
90.06  2.85372637210305\\
93.06  2.76174579756835\\
96.06  2.6755109231364\\
99.06  2.59449972839266\\
102.06  2.51825157108093\\
105.06  2.44635842368905\\
108.06  2.37845757000468\\
111.06  2.31422548520556\\
114.06  2.25337268183072\\
117.06  2.19563934824641\\
120.06  2.14079164099643\\
123.06  2.0886185195405\\
126.06  2.03892903285516\\
129.06  1.99154998456839\\
132.06  1.94632391628427\\
135.06  1.90310735970379\\
138.06  1.86176931661118\\
141.06  1.822189932835\\
144.06  1.78425933789696\\
147.06  1.74787662668961\\
150.06  1.71294896328189\\
};

\addplot [color=color1, dashed, line width=1.0pt, mark=o, mark options={solid, color1}]
  table[row sep=crcr]{%
6.06  54.9857609590163\\
21.06  19.0107502252181\\
36.06  7.73031897305468\\
51.06  5.24645388005201\\
66.06  4.44299972271688\\
81.06  3.29688696340801\\
96.06  3.09671707757901\\
111.06  2.46613300260062\\
126.06  2.10798246824792\\
141.06  1.85640084346106\\
};
\addlegendentry{Two-path Est.~\eqref{eq:1way_2path_Estimator}: unknown map }

\addplot [color=color3, dashed, mark=square, line width=1.0pt,mark options={solid}]
  table[row sep=crcr]{%
6.06  52.9070503565476\\
21.06  19.1719283173084\\
36.06  5.08601081663575\\
51.06  4.20447046904018\\
66.06  2.9035995793008\\
81.06  2.26708009312136\\
96.06  2.36292480714369\\
111.06  1.70072862924439\\
126.06  1.68512423930371\\
141.06  1.33088668114619\\
};
\addlegendentry{Two-path Est.~\eqref{eq:1way_2path_Estimator}: known map}

\addplot [color=color2, dashed, line width=1.0pt,mark=diamond, mark options={solid, color2}]
  table[row sep=crcr]{%
6.06  102.6049500256417\\
21.06  73.7177471211977\\
36.06  29.2261081064581\\
51.06  19.7535020046353\\
66.06  15.2372226828439\\
81.06  12.3585946155433\\
96.06  10.4930851837764\\
111.06  9.04666033670334\\
126.06  7.77415498991754\\
141.06  7.0029067776267\\
};
\addlegendentry{\ac{los} Est.~\eqref{eq:1way_LoS_Estimator}: on two-path model~\eqref{eq:y_AB_1R}}

\addplot [color=color4, dashed, line width=1.0pt, mark=triangle, mark options={solid, color4}]
  table[row sep=crcr]{%
6.06  40.711614155046\\
21.06  11.3542633801026\\
36.06  7.25471017301888\\
51.06  5.12145325609924\\
66.06  3.7674639708895\\
81.06  2.98744006195163\\
96.06  2.90114332773962\\
111.06  2.34464599209946\\
126.06  2.01008249351385\\
141.06  1.7469837104274\\
};
\addlegendentry{\ac{los} Est.~\eqref{eq:1way_LoS_Estimator}: on \ac{los} model~\eqref{eq:y_A_B_vector}}


\end{axis}
\end{tikzpicture}%
    \caption{RMSE of the phase offset estimate $\delta_{\phi_{\text{AB}}}$ in degree versus the bandwidth for various estimators for the scenario with known AP positions. }
    \label{fig:PhaseOff_vs_BW_knownPos}
\end{figure}
\begin{figure}[t]
    \centering
    \input{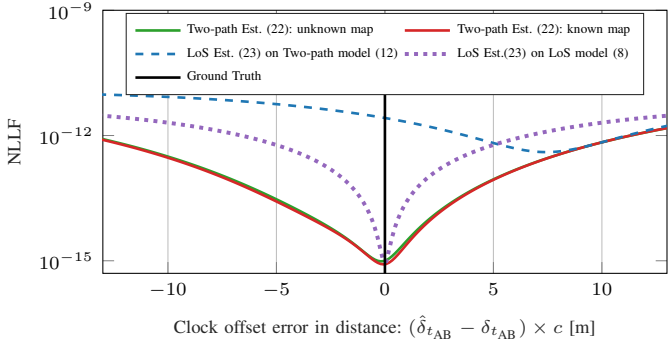}
 \caption{1-D snapshot of NLLF results for clock offset error in distance using 3-D and 1-D search for two-path and LoS estimators, given known AP positions. Signal bandwidth $W=12$ MHz. }
    \label{fig:NLLF_knownPos}
    \vspace{-0.3cm}
\end{figure}
\subsubsection{Impact of Bandwidth}
The \acp{rmse} on the estimation of clock offset $\hat{\delta}_{t_\text{AB}}$ and phase offset $\hat{\delta}_{\phi_\text{AB}^{ }}$ versus bandwidth $W$ are shown in Fig.~\ref{fig:ClockOff_vs_BW_knownPos} and Fig.~\ref{fig:PhaseOff_vs_BW_knownPos}, averaged over 50 Monte Carlo simulations. The corresponding \acp{crlb} are given in solid lines. The results indicate that the proposed ML-based estimators can nearly reach the corresponding CRLBs. Fine-tuning the grid search size can further improve their stability. With unknown map, the reflection path degrades the two-path estimator's performance compared to only LoS estimator on LoS path model. This degradation lessens with growing bandwidths, which allows a better resolution between LoS and reflection paths (delay difference in Hz: $1/(\tau_\text{AR}-\tau_{\text{AB}}) \approx 17$ MHz). With known map, the two-path estimator performs better than the LoS estimator for larger bandwidths ($>30$ MHz), showing a nonlinear improvement with bandwidth. The \ac{los} estimator performs poorly under the two-path model due to mismatched estimators. The two-path estimators' phase offset \acp{rmse} deviates from the \ac{crlb} for small bandwidths ($<20$ MHz) because of phase wrapping. The RMSE values in Fig.~\ref{fig:PhaseOff_vs_BW_knownPos} are larger than those in Fig.~\ref{fig:ClockOff_vs_BW_knownPos}, aligning with~\eqref{eq:CRB_1Path_delta_tau_AB} and~\eqref{eq:CRB_1Path_delta_phi_AB}.

\subsubsection{NLLF Results}
Fixing the bandwidth $W=12$ MHz, Fig.~\ref{fig:NLLF_knownPos} shows the clock offset estimation error, $(\hat{\delta}_{t_{\text{AB}}} - \delta_{t_{\text{AB}}})\times c$ measured in distance, which is a 1D slice of the 3D NLLF results, fixing the other two parameters, for the two-path estimator~\eqref{eq:1way_2path_Estimator} and the 1D NLLF of the LoS estimator~\eqref{eq:1way_LoS_Estimator}. The LoS estimator~\eqref{eq:1way_LoS_Estimator} shows inferior performance under the two-path observation model~\eqref{eq:y_AB_1R} due to interference from the reflection path. In contrast, the two-path estimator~\eqref{eq:1way_2path_Estimator} with unknown reflection parameters offers better accuracy. Having knowledge of reflection path parameters offers almost no advantage because of the limited bandwidth of $12$ MHz. The LoS estimator is most effective under the LoS observation model~\eqref{eq:y_A_B_vector}, displaying the sharpest NLLF curve near the ground truth. The results demonstrate the superior performance of the proposed estimators, highlighting the importance of a precise multi-path model over map data for calibration accuracy.

\subsection{Scenario 2: Calibration with Unknown AP Positions}

Fig.~\ref{fig:ClockOff_Delay_vs_BW_UnknownPos} shows the \acp{rmse} on the estimation of clock offset $\delta_{\tau_{\text{AB}}}$ and delay $\tau_{\text{AB}}$ versus bandwidth $W$ for the LoS estimator~\eqref{eq:2way_LoS_Estimator} under unknown AP positions and bi-directional observations over LoS and two-path channel. The results of the LoS estimator~\eqref{eq:1way_LoS_Estimator} using uni-directional observation from Fig.~\ref{fig:ClockOff_vs_BW_knownPos} is also shown. The corresponding \acp{crlb} are given in solid lines. 

As bandwidth increases, the performance of both LoS estimators for clock offset $\delta_{\tau_{\text{AB}}}$ improves linearly. The bi-directional LoS estimator~\eqref{eq:2way_LoS_Estimator} halves the estimation error, as discussed by the CRLB analysis in Section~\ref{section:CRLB_unknownPos}. The delay estimation $\tau_{\text{AB}}$ demonstrates a threshold effect where, with sufficiently large bandwidth ($>100$ MHz), the estimation can approach its \ac{crlb} at a much lower level (see right y-axis) due to carrier phase exploitation, similar to the threshold effect discussed in~\cite{wymeersch2023fundamental}. The two-path measurements have a negligible effect on the bi-directional LoS estimator~\eqref{eq:2way_LoS_Estimator} for both clock offset and delay, in contrast to the significant impact on the uni-directional LoS estimator~\eqref{eq:1way_LoS_Estimator}, as illustrated in Fig.~\ref{fig:ClockOff_vs_BW_knownPos} and Fig.~\ref{fig:NLLF_knownPos}, mainly because bi-directional measurements help to decouple reflection parameters from LoS parameters. For instance, $\delta_{\phi_{\text{AB}}}$ changes signs in~\eqref{eq:y_AB_1R} and~\eqref{eq:y_BA_1R}, unlike $\tau_{\text{AR}}$ and $\delta_{t_{\text{AB}}}$.

This threshold effect arises due to the integer ambiguity error~\eqref{eq:est_pha_off_add_Bidrect} in the phase offset estimation when exploiting the carrier phase. Fig.~\ref{fig:NLLF_unknownPos} shows the 1-D extraction from the 2-D \ac{nllf} results of the estimator~\eqref{eq:2way_LoS_Estimator} for delay estimation at 50 MHz and 200 MHz. With larger bandwidths, a distinct global optimum correctly identifies the optimum, while smaller bandwidths lead to numerous local optima and incorrect estimates. Mismatched two-path measurements on the LoS estimator show that the large bandwidth advantage is less apparent compared to the cases without mismatch.

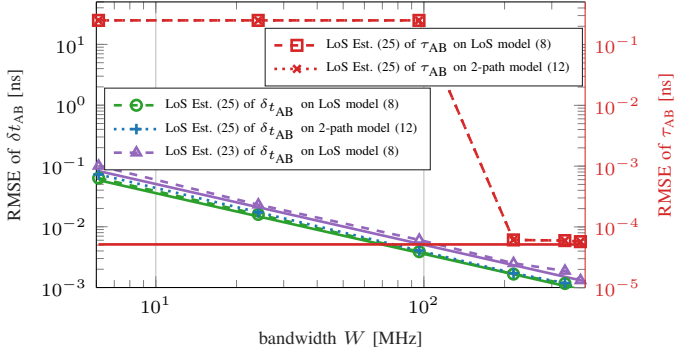
\begin{figure}[t]
    \centering
\begin{tikzpicture}[font=\scriptsize]
\definecolor{color2}{rgb}{0.12156862745098,0.466666666666667,0.705882352941177}
\definecolor{color0}{rgb}{1,0.498039215686275,0.0549019607843137}
\definecolor{color1}{rgb}{0.172549019607843,0.627450980392157,0.172549019607843}
\definecolor{color3}{rgb}{0.83921568627451,0.152941176470588,0.156862745098039}
\definecolor{color4}{rgb}{0.580392156862745,0.403921568627451,0.741176470588235}
\definecolor{color5}{rgb}{0.549019607843137,0.337254901960784,0.294117647058824}
\definecolor{color6}{rgb}{0.890196078431372,0.466666666666667,0.76078431372549}
\definecolor{color7}{rgb}{0.737254901960784,0.741176470588235,0.133333333333333}

\begin{axis}[%
width=6.5cm,
height=3.8 cm,
at={(0,0)},
axis y line*=left,
scale only axis,
xmode=log,
xmin=6,
xmax=400,
xlabel style={font=\color{white!15!black}},
xlabel={\scriptsize bandwidth $W$ [MHz] },
ymode=log,
ymin=1e-3,
ymax=50,
ylabel style={font=\color{white!15!black}},
ylabel={\scriptsize RMSE of $\delta{t_{\text{AB}}}$ [ns]},
axis background/.style={fill=white},
title style={font=\bfseries},
xmajorgrids,
legend style={at={(0.35,0.7)}, anchor=north, legend cell align=left, draw=white!15!black,font=\tiny} 
,
legend columns=1,
]

\addplot [color=color1, line width=1.0pt, mark options={solid, color1},forget plot]
  table[row sep=crcr]{%
6.12  0.0583320271612078\\
24.12  0.0147980012800896\\
96.12  0.00371331306375712\\
384.12  0.000929197623019377\\
1536.12  0.000232353836045488\\
};

\addplot [color=color3, line width=1.0pt, mark options={solid, color3},forget plot]
  table[row sep=crcr]{%
6.12  5.15174313561371e-05\\
24.12  5.15171392550135e-05\\
96.12  5.15124941240318e-05\\
384.12  5.14384532647437e-05\\
1536.12  5.02960147245628e-05\\
};

\addplot [color=color4, line width=1.0pt, mark options={solid, color4},forget plot]
  table[row sep=crcr]{%
6.12  0.0824939439111295\\
24.12  0.0209275341064605\\
96.12  0.00525141769610069\\
384.12  0.00131408388059855\\
1536.12  0.000328597946204944\\
};

\addplot [color=color1, dashed, line width=1.0pt, mark=o, mark options={solid, color1}]
  table[row sep=crcr]{%
6.12  0.0632061593397136\\
24.12  0.0159632294230248\\
96.12  0.00388999877556156\\
216  0.00166403540619204\\
336  0.00116483067167576\\
384.12  0.000907705397155643\\
};
\addlegendentry{LoS Est.~\eqref{eq:2way_LoS_Estimator} of $\delta_{t_{\text{AB}}}$ on LoS model~\eqref{eq:y_A_B_vector} }

\addplot [color=color2, dotted, line width=1.0pt, mark=+, mark options={solid, color2}]
  table[row sep=crcr]{%
6.12  0.0726737592364927\\
24.12  0.017357932579272\\
96.12  0.004057925296527\\
216  0.0017257925729572\\
336  0.00118529526956295\\
384.12  0.000922592759275927\\
};
\addlegendentry{LoS Est.~\eqref{eq:2way_LoS_Estimator} of $\delta_{t_{\text{AB}}}$ on 2-path model~\eqref{eq:y_AB_1R}}

\addplot [color=color4, dashed, line width=1.0pt, mark=triangle, mark options={solid, color4}]
  table[row sep=crcr]{%
6.12  0.100380380268833\\
24.12  0.0230678255155171\\
96.12  0.00607912070046775\\
216  0.00250537889339227\\
336  0.0018769852452763\\
384.12  0.00131168129971055\\
};
\addlegendentry{LoS Est.~\eqref{eq:1way_LoS_Estimator} of $\delta_{t_{\text{AB}}}$ on LoS model~\eqref{eq:y_A_B_vector}}


\end{axis}

\begin{axis}[%
width=6.5cm,
height=3.8 cm,
at={(0,0)},
axis y line*=right,
axis x line=none,
scale only axis,
ylabel={RMSE of ${\tau_{\text{AB}}}$ [ns]},
yticklabel style={color=color3}, 
y label style={color=color3}, 
y axis line style={color3}, 
ymode=log,
xmode=log,
ymin=1e-5,
ymax=0.5,
xmin=6,
xmax=400,
legend style={at={(0.67,0.91)}, anchor=north, legend cell align=left, draw=white!15!black,font=\tiny}
]

\addplot [color=color3, dashed, line width=1.0pt, mark=square, mark options={solid, color3}]
  table[row sep=crcr]{%
6.12  0.249982930362031\\
24.12  0.24998276242256\\
96  0.24994811515936\\
216  6.10982223520846e-05\\
336  5.95592577438793e-05\\
384.12  5.72223669375023e-05\\
};
\addlegendentry{LoS Est.~\eqref{eq:2way_LoS_Estimator} of $\tau_{\text{AB}}$ on LoS model~\eqref{eq:y_A_B_vector}}

\addplot [color=color3, dotted, line width=1.0pt, mark=x, mark options={solid, color3}]
  table[row sep=crcr]{%
6.12  0.24992795629655\\
24.12  0.2499520527052\\
96  0.249825027502226\\
216  6.11265280627256e-05\\
336  5.952572052725252e-05\\
384.12  5.7225926592752e-05\\
};
\addlegendentry{LoS Est.~\eqref{eq:2way_LoS_Estimator} of $\tau_{\text{AB}}$ on 2-path model~\eqref{eq:y_AB_1R}}

\addplot [color=color3, line width=1.0pt, mark options={solid, color3},forget plot]
  table[row sep=crcr]{%
6.12  5.15174313561371e-05\\
24.12  5.15171392550135e-05\\
96.12  5.15124941240318e-05\\
384.12  5.14384532647437e-05\\
1536.12  5.02960147245628e-05\\
};
\end{axis}
\end{tikzpicture}%
    \caption{\ac{rmse} on the estimation of the clock offset $\hat{\delta}_{t_{\text{AB}}}$ (left y-axis) and delay $\hat{\tau}_{\text{AB}}$ (right y-axis) versus the bandwidth for two LoS estimators, \eqref{eq:1way_LoS_Estimator} and~\eqref{eq:2Path_est_knownPos}, in scenarios with and without known AP positions and LoS path. }
\label{fig:ClockOff_Delay_vs_BW_UnknownPos}
\vspace{-0.cm}
\end{figure}

\begin{figure}[t]
    \centering
\input{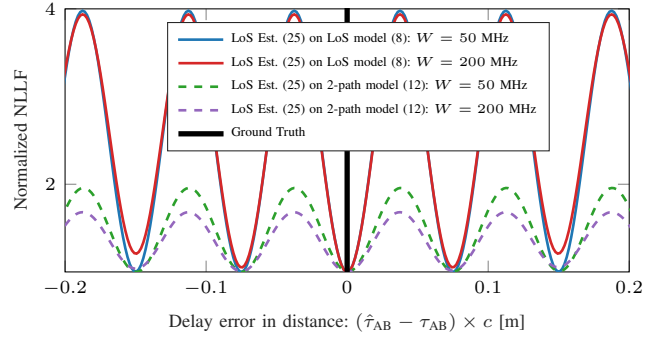}
    \caption{1-D snapshot of 2-D NLLF results for delay estimate error in distance for the LoS estimators and~\eqref{eq:2way_LoS_Estimator}.}
\label{fig:NLLF_unknownPos}
\end{figure}

\section{Conclusion}
This paper addresses phase and time calibration in distributed mMIMO context using a wideband multi-path model for separate phase and time offsets calibration. We cover scenarios with/without known AP positions and/or reflection information, offering a practical framework. We developed ML estimators for each scenario. Our CRLB analyses establish theoretical performance benchmarks, improving the understanding of calibration limits. Extensive simulations validated our estimators against the CRLBs. Focusing on two single-antenna APs, we note that our model and techniques are applicable to intra-AP and AP-UE calibration in distributed massive MIMO. Future work can explore calibration with unknown AP positions in a multi-path channel and localization with multiple APs.

\appendices
\vspace{-0.1cm}
\section{Proof of \eqref{eq:2way_2path_Estimator} in Proposition 1} \label{sec:Appendix_TwoPath_unknown}
\vspace{-0.1cm}
Using the \ac{ml} criterion, we solve the estimation of $\boldsymbol{\eta}$ from \eqref{eq:unknownPar_unknown_positions_LoS} using the bi-directional two-path observation $\boldsymbol{y}_{\text{}}=[\boldsymbol{y}_{\text{AB}}^{\mathsf{T}},\boldsymbol{y}_{\text{BA}}^{\mathsf{H}}]^{\mathsf{T}} \in \mathbb{C}^{2N}$ from~\eqref{eq:y_AB_1R} and~\eqref{eq:y_BA_1R} as
\begin{align}
    \hat{\boldsymbol{\eta}}^{\text{}} = \arg \underset{\boldsymbol{\eta}_{}}{\max} \quad p(\boldsymbol{y}_{} | \boldsymbol{\eta}_{}).\label{eq:obj_func_p_y_Bidirect_2path}
\end{align}
Denote $\boldsymbol{\mu}_{\text{AB-BA}}\triangleq [\boldsymbol{\mu}_{\text{AB}}^{\mathsf{T}}, \boldsymbol{\mu}_{\text{BA}}^{\mathsf{H}}]^{\mathsf{T}}\in \mathbb{C}^{2N} $, and $\boldsymbol{\mu}_{\text{R}} \triangleq [\boldsymbol{\mu}_{\text{AR}}^{\mathsf{T}}, \boldsymbol{\mu}_{\text{BR}}^{\mathsf{H}}]^{\mathsf{T}} \in \mathbb{C}^{2N}$. Solving~\eqref{eq:obj_func_p_y_Bidirect_2path} is equivalent to minimize the negative log-likelihood version of  $ p(\boldsymbol{y}_{} | \boldsymbol{\eta}_{})$. After removing irrelevant terms in $ \log p(\boldsymbol{y}_{} | \boldsymbol{\eta}_{})$, the problem in~\eqref{eq:obj_func_p_y_Bidirect_2path} becomes to minimize the \ac{nllf}, given by
    $\hat{\boldsymbol{\eta}}^{\text{}} = \arg \underset{\boldsymbol{\eta}_{}}{\min} \quad  \mathcal{L}_{\text{ML}}(\boldsymbol{\eta}_{})$,
where
\begin{align}
   &\mathcal{L}_{\text{ML}}(\boldsymbol{\eta}_{}) \triangleq
    \left\|
   \boldsymbol{y}_{} - \boldsymbol{\mu}_{\text{AB-BA}} - \boldsymbol{\mu}_{\text{R}} 
    \right\|^2_2 \nonumber \\ &= \Big\|
   \boldsymbol{y}_{} - \beta_{\text{AB}} [e^{-j{\delta_{\phi_{\text{AB}}^{}}}} e^{j{\tilde{\varphi}^{}_{\text{AB}}}}  \boldsymbol{c}_{\text{AB}}^{\mathsf{T}},  e^{j{\delta_{\phi_{\text{BA}}^{}}}} e^{-j{\tilde{\varphi}^{}_{\text{BA}}}}  \boldsymbol{c}_{\text{BA}}^{\mathsf{H}}]^{\mathsf{T}}  \nonumber \\
& \hspace{0.8cm} -\beta_{\text{AR}} [e^{-j{\delta_{\phi_{\text{AB}}^{}}}} e^{j{\tilde{\varphi}^{}_{\text{AR}}}}  \boldsymbol{c}_{\text{AR}}^{\mathsf{T}},  e^{j{\delta_{\phi_{\text{BA}}^{}}}} e^{-j{\tilde{\varphi}^{}_{\text{BR}}}}  \boldsymbol{c}_{\text{BR}}^{\mathsf{H}}]^{\mathsf{T}} 
    \Big\|^2_2, \label{eq:obj_func_Bidirect_log_p_before_2path}\\
    &=\Big\|
\boldsymbol{y} - \beta_{\text{AB}} e^{-j{\delta_{\phi_{\text{AB}}^{}}}} \boldsymbol{c}_{\text{AB-BA}}
-\beta_{\text{AR}} e^{-j{\delta_{\phi_{\text{AB}}^{}}}} \boldsymbol{c}_{\text{AR-BR}}
    \Big\|^2_2,
    \label{eq:obj_func_Bidirect_log_p_2path}
\end{align}
where $\tilde{\varphi}_{\text{AB}} \triangleq \varphi_{\text{AB}} - \delta_{\phi_{\text{AB}}^{}}$, $\tilde{\varphi}_{\text{BA}} \triangleq \varphi_{\text{BA}} - \delta_{\phi_{\text{BA}}^{}}$, $\tilde{\varphi}_\text{AR} \triangleq \varphi_\text{AR} - \delta_{\phi_{\text{AB}}^{}}$, $\tilde{\varphi}_\text{BR} \triangleq \varphi_\text{BR} - \delta_{\phi_{\text{BR}}^{}}$, $\boldsymbol{c}_{\text{AB-BA}} \triangleq [e^{j{\tilde{\varphi}^{}_{\text{AB}}}}  \boldsymbol{c}_{\text{AB}}^{\mathsf{T}}, e^{-j{\tilde{\varphi}^{}_{\text{BA}}}}  \boldsymbol{c}_{\text{BA}}^{\mathsf{H}}]^{\mathsf{T}}$, $\boldsymbol{c}_{\text{AR-BR}} \triangleq [e^{j{\tilde{\varphi}^{}_{\text{AR}}}}  \boldsymbol{c}_{\text{AR}}^{\mathsf{T}}, e^{-j{\tilde{\varphi}^{}_{\text{BR}}}}  \boldsymbol{c}_{\text{BR}}^{\mathsf{H}}]^{\mathsf{T}}$, and from \eqref{eq:obj_func_Bidirect_log_p_before_2path} to \eqref{eq:obj_func_Bidirect_log_p_2path} we use $\delta_{\phi_{\text{AB}}^{}}=-\delta_{\phi_{\text{BA}}^{}}$ (defined in~\eqref{eq:delta_phi_AB_BA}). 

We can express two channel gains in vector form as $\boldsymbol{\beta} = [\beta_{\text{AB}},\beta_\text{AR}]^{\mathsf{T}}$. Thus we can rewrite~\eqref{eq:obj_func_Bidirect_log_p_2path} as
\begin{align}
   \mathcal{L}_{\text{ML}}(\boldsymbol{\eta}_{}) = \Big\|
   \boldsymbol{y}_{\text{}} - e^{-j\delta_{\phi_{\text{AB}}^{}}}\boldsymbol{C}_{\text{A-R-B}} \boldsymbol{\beta},
    \Big\|^2_2,\label{eq:obj_func_log_p_vector_Bidirect}
\end{align}
where $\boldsymbol{C}_{\text{A-R-B}}\triangleq [\boldsymbol{c}_{\text{AB-BA}}, \boldsymbol{c}_\text{AR-BR}] \in \mathbb{C}^{2N\times 2}$. To solve this ML estimation problem, we first estimate $\boldsymbol{\beta}$ as a function of the remaining parameters in closed-form as
\begin{align}
    \hat{\boldsymbol{\beta}}_{}^{\text{ML}} &=  \Re\big\{ \big(\boldsymbol{C}_{\text{A-R-B}}^{\mathsf{H}} \boldsymbol{C}_{\text{A-R-B}} 
    \big)^{-1} (e^{-j\delta_{\phi_{\text{AB}}^{}}} \boldsymbol{C}_{\text{A-R-B}})^{\mathsf{H}} \boldsymbol{y}_{\text{}} \big\} \nonumber \\
    &= \frac{1}{2} \big(\boldsymbol{C}_{\text{A-R-B}}^{\mathsf{H}} \boldsymbol{C}_{\text{A-R-B}} 
    \big)^{-1} (e^{-j\delta_{\phi_{\text{AB}}^{}}} \boldsymbol{C}_{\text{A-R-B}})^{\mathsf{H}} \boldsymbol{y}_{\text{}} \nonumber\\
   & \hspace{0.3cm} + \frac{1}{2}\Big( \boldsymbol{y}_{\text{}}^{\mathsf{H}} (e^{-j\delta_{\phi_{\text{AB}}^{}}} \boldsymbol{C}_{\text{A-R-B}}) \big(\boldsymbol{C}_{\text{A-R-B}}^{\mathsf{H}} \boldsymbol{C}_{\text{A-R-B}} 
    \big)^{-1}   
    \Big)^{\mathsf{T}}.
    \label{eq:ML_est_beta_vector_Bidirect}
\end{align}

Substituting~\eqref{eq:ML_est_beta_vector_Bidirect} into~\eqref{eq:obj_func_log_p_vector_Bidirect}  drops the dependency of $\boldsymbol{\beta}$, we obtain the compressed loss function
\begin{align}
    &\mathcal{L}_{\text{ML}}(\tau_{\text{AB}}, \delta_{\phi_{\text{AB}}^{}},\delta_{t_{\text{AB}}}, \tau_{\text{AR}}, \delta_{\phi_\text{AR}^{ }}) = \| \breve{\boldsymbol{y}}_{\text{}} - e^{-2j\delta_{\phi_{\text{AB}}^{}}} \breve{\boldsymbol{c}}_{\text{A-R-B}}   
    \|^2_2, \label{eq:obj_func_compres_beta_Bidirect}
\end{align} 
where $\breve{\boldsymbol{y}}_{\text{}} \triangleq {\boldsymbol{y}_{\text{}} -
    \frac{1}{2} \boldsymbol{C}_{\text{A-R-B}} \big(\boldsymbol{C}_{\text{A-R-B}}^{\mathsf{H}} \boldsymbol{C}_{\text{A-R-B}} 
    \big)^{-1} \boldsymbol{C}_{\text{A-R-B}}^{\mathsf{H}} \boldsymbol{y}_{\text{}}}$ and $ 
    \breve{\boldsymbol{c}}_{\text{A-R-B}} \triangleq {\frac{1}{2} \boldsymbol{C}_{\text{A-R-B}} \big(\boldsymbol{C}_{\text{A-R-B}}^{\mathsf{H}} \boldsymbol{C}_{\text{A-R-B}} 
    \big)^{-1} \boldsymbol{C}_{\text{A-R-B}}^{\mathsf{T}} (\boldsymbol{y}_{\text{}}^{\mathsf{H}})^T}$.
Similarly, we can also estimate $\delta_{\phi_{\text{AB}}^{}}$ in closed-form using the remaining parameters as
\begin{align}
   \hat{\delta}_{\phi^{ }_{\text{AB}}} =  -\frac{\angle \big(\breve{\boldsymbol{c}}^{\mathsf{H}}_{\text{A-R-B}}\breve{\boldsymbol{y}}_{\text{}}\big)}{2} + z_1\pi, 
   \label{eq:est_pha_off_add_Bidrect}
\end{align}
where $z_{1} \in \mathbb{Z}$ is introduced to account for possible integer ambiguities in phase estimation. Inserting~\eqref{eq:est_pha_off_add_Bidrect} into~\eqref{eq:obj_func_compres_beta_Bidirect} yields
\begin{align}
    \mathcal{L}_{\text{2-path}}^{\text{Bi}}(\tau_{\text{AB}},\delta_{t_{\text{AB}}},\tau_{\text{AR}}, \delta_{\phi_\text{AR}^{ }}) = &\left\| \breve{\boldsymbol{y}}_{}\right\|^2_2 + \left\| \breve{\boldsymbol{c}}_{\text{ABA}}\right\|^2_2 
     - 2 \left| \breve{\boldsymbol{c}}_{\text{ABA}}^{\mathsf{H}} \breve{\boldsymbol{y}}_{}\right|.
     \label{eq:2Path_est_unknownPos}
\end{align}

\section{Proof of \eqref{eq:1way_2path_Estimator}, \eqref{eq:1way_LoS_Estimator}, and \eqref{eq:2way_LoS_Estimator} in Proposition 1}\label{sec:Appendix_Other3Estimators}
Using the ML criterion, we solve the estimation of $\boldsymbol{\eta}$ from~\eqref{eq:unknownPar_known_positions},~\eqref{eq:unknownPar_known_positions_LoS}, and~\eqref{eq:unknownPar_unknown_positions_LoS} by minimizing the following \acp{nllf}

\noindent$\mathcal{L}_{\text{ML}}(\boldsymbol{\eta}) \triangleq$
\begin{numcases}{}
    \left\|\boldsymbol{y}_{\text{AB}} - \boldsymbol{\mu}_{\text{AB}} - \boldsymbol{\mu}_\text{AR}\right\|^2_2, & \text{for} $\boldsymbol{\eta}$ \text{in} \eqref{eq:unknownPar_known_positions}, $\boldsymbol{y}_{\text{AB}}$ \text{in} \eqref{eq:y_AB_1R} \label{eq:NLLF_collect_1}\\
    \left\|\boldsymbol{y}_{\text{AB}} - \boldsymbol{\mu}_{\text{AB}}\right\|^2_2, & \text{for} $\boldsymbol{\eta}$ \text{in} \eqref{eq:unknownPar_known_positions_LoS}, $\boldsymbol{y}_{\text{AB}}$ \text{in} \eqref{eq:y_A_B_vector} \label{eq:NLLF_collect_2}\\
    \left\|\boldsymbol{y} - \boldsymbol{\mu}_{\text{AB-BA}}\right\|^2_2, & \text{for} $\boldsymbol{\eta}$ \text{in} \eqref{eq:unknownPar_unknown_positions_LoS}, $\boldsymbol{y}$ \text{in} \eqref{eq:y_A_B_vector}, \eqref{eq:y_B_A_vector}, \label{eq:NLLF_collect_3}
\end{numcases}
where the bi-directional LoS observation $\boldsymbol{y}_{\text{}}=[\boldsymbol{y}_{\text{AB}}^{\mathsf{T}},\boldsymbol{y}_{\text{BA}}^{\mathsf{H}}]^{\mathsf{T}} \in \mathbb{C}^{2N}$ in~\eqref{eq:NLLF_collect_3} is obtained from~\eqref{eq:y_A_B_vector} and~\eqref{eq:y_B_A_vector}.

We omit some details for simplicity and follow similar steps from~\eqref{eq:obj_func_Bidirect_log_p_before_2path} to~\eqref{eq:est_pha_off_add_Bidrect} to compress the unknown $\beta_{\text{AB}}$, $\beta_{\text{BA}}$ (if applicable), $\beta_{\text{AR}}$ (if applicable), $\beta_{\text{BR}}$ (if applicable), and $\delta_{\phi_{\text{AB}}}$ in each NLLF function in~\eqref{eq:NLLF_collect_1},~\eqref{eq:NLLF_collect_2}, and~\eqref{eq:NLLF_collect_3}. Specifically, we obtain the compressed loss function of~\eqref{eq:NLLF_collect_1} for the uni-directional two-path estimator,
\begin{align}
    \mathcal{L}_{\text{2-path}}^{\text{Uni}}(\delta_{t_{\text{AB}}}, \tau_\text{AR}, \delta_{\phi_\text{AR}^{ }}) = &\left\| \breve{\boldsymbol{y}}_{\text{AB}}\right\|^2_2 + \left\| \breve{\boldsymbol{c}}_{\text{ABR}}\right\|^2_2 
     - 2 \left| \breve{\boldsymbol{c}}_{\text{ABR}}^{\mathsf{H}} \breve{\boldsymbol{y}}_{\text{AB}}\right|, \label{eq:2Path_est_knownPos}
\end{align}
where $\breve{\boldsymbol{y}}_{\text{AB}} \triangleq {\boldsymbol{y}_{\text{AB}} -
    \frac{1}{2} \boldsymbol{C}_{\text{ABR}} \big(\boldsymbol{C}_{\text{ABR}}^{\mathsf{H}} \boldsymbol{C}_{\text{ABR}} 
    \big)^{-1} \boldsymbol{C}_{\text{ABR}}^{\mathsf{H}} \boldsymbol{y}_{\text{AB}}}$ and $ 
    \breve{\boldsymbol{c}}_{\text{ABR}} \triangleq {\frac{1}{2} \boldsymbol{C}_{\text{ABR}} \big(\boldsymbol{C}_{\text{ABR}}^{\mathsf{H}} \boldsymbol{C}_{\text{ABR}} 
    \big)^{-1} \boldsymbol{C}_{\text{ABR}}^{\mathsf{T}} (\boldsymbol{y}_{\text{AB}}^{\mathsf{H}})^T}$, $\boldsymbol{C}_{\text{ABR}}\triangleq [e^{j{\tilde{\varphi}^{}_{\text{AB}}}}  \boldsymbol{c}_{\text{AB}}, e^{j{\tilde{\varphi}^{}_\text{AR}}}  \boldsymbol{c}_\text{AR}] \in \mathbb{C}^{N\times 2}$, $\boldsymbol{c}_{\text{AB}} \triangleq \boldsymbol{a}(\tilde{\tau}_{\text{AB}}) \odot \boldsymbol{s}_{\text{A}}$ and $ \boldsymbol{c}_\text{AR} \triangleq \boldsymbol{a}(\tilde{\tau}_\text{AR}) \odot \boldsymbol{s}_{\text{A}}$.
    
Similarly, we can also obtain the compressed loss function of~\eqref{eq:NLLF_collect_2} for the uni-directional LoS estimator,
\begin{align}
    \mathcal{L}_{\text{LoS}}^{\text{Uni}}(\delta_{t_{\text{AB}}}) = &\left\| \Acute{\boldsymbol{y}}_{\text{AB}}\right\|^2_2 + \left\| \Acute{\boldsymbol{c}}_{\text{AB}}\right\|^2_2
   - 2 \left| \Acute{\boldsymbol{c}}_{\text{AB}}^{\mathsf{H}} \Acute{\boldsymbol{y}}_{\text{AB}}\right|,\label{eq:NLLF_1way_LoS_Estimator}
\end{align}
where $\Acute{\boldsymbol{y}}_{\text{AB}} \triangleq \boldsymbol{y}_{\text{AB}} - {\boldsymbol{c}_{\text{AB}}^{\mathsf{H}} \boldsymbol{y}_{\text{AB}}  }/{2\| \boldsymbol{c}_{\text{AB}} \|^2_2}\boldsymbol{c}_{\text{AB}}$ and $\Acute{\boldsymbol{c}}_{\text{AB}} \triangleq {e^{-j4\pi f_c \tilde{\tau}_{\text{AB}}} \boldsymbol{y}_{\text{AB}}^{\mathsf{H}} \boldsymbol{c}_{\text{AB}}}{\| \boldsymbol{c}_{\text{AB}} \|^2_2 }  \boldsymbol{c}_{\text{AB}}/2$.
We obtain the compressed loss function of~\eqref{eq:NLLF_collect_3} for the bi-directional LoS estimator,
\begin{align}
    \mathcal{L}_{\text{LoS}}^{\text{Bi}}(\tau_{\text{AB}},\delta_{t_{\text{AB}}}) = &\left\| \tilde{\boldsymbol{y}}_{}\right\|^2_2 + \left\| \tilde{\boldsymbol{c}}_{\text{ABA}}\right\|^2_2 
     - 2 \left| \tilde{\boldsymbol{c}}_{\text{ABA}}^{\mathsf{H}} \tilde{\boldsymbol{y}}_{}\right|, \label{eq:NLLF_1Path_est_unknownPos}
\end{align}
where $\tilde{\boldsymbol{y}} \triangleq \boldsymbol{y} - {\boldsymbol{c}_{\text{ABA}}^{\mathsf{H}} \boldsymbol{y}\boldsymbol{c}_{\text{ABA}} }/{\|\boldsymbol{c}_{\text{ABA}}\|^2_2}$ and $\tilde{\boldsymbol{c}}_{\text{ABA}} \triangleq {\boldsymbol{y}_{}^{\mathsf{H}} \boldsymbol{c}_{\text{ABA}}\boldsymbol{c}_{\text{ABA}} } /{\|\boldsymbol{c}_{\text{ABA}}\|^2_2}$. 

\section{CRLB Calculation of $\boldsymbol{\eta}$ from~\eqref{eq:unknownPar_known_positions}. } \label{appdices: CRLB_1}
The \ac{fim} of $\boldsymbol{\eta}$ from~\eqref{eq:unknownPar_known_positions} is calculated following~\eqref{eq:FIM_Compute} using ${\partial \boldsymbol{\mu}_{}^{\mathsf{}}}/{\partial \delta_{t_{\text{AB}}}} = \boldsymbol{D}_1 \boldsymbol{\mu}$, 
  ${\partial \boldsymbol{\mu}_{}^{\mathsf{}}}/{\partial \delta_{\phi_{\text{AB}}^{}}}= -j\boldsymbol{\mu}$, 
  ${\partial \boldsymbol{\mu}_{}^{\mathsf{}}}/{\partial \tau_{\text{AR}}}= \boldsymbol{D}_1 \boldsymbol{\mu}_\text{AR}$
  ${\partial \boldsymbol{\mu}_{}^{\mathsf{}}}/{\partial \delta_{\phi_{\text{AR}}^{}}} = -j\boldsymbol{\mu}_{\text{AR}}$ 
  ${\partial \boldsymbol{\mu}_{}^{\mathsf{}}}/{\partial \beta_\text{AR}}= \boldsymbol{\mu}_\text{AR}/\beta_\text{AR}$
$  {\partial \boldsymbol{\mu}_{}^{\mathsf{}}}/{\partial \beta_{\text{AB}}} = \boldsymbol{\mu}_{\text{AB}}/\beta_{\text{AB}}$. Here $\boldsymbol{D}_1=\text{diag}(\boldsymbol{d}_1)$, and $\boldsymbol{d}_1 =-j2\pi (f_c \boldsymbol{1}_{N} +\Delta_f ([-(N-1)/2,\cdots,(N-1)/2]))$. Using these derivatives, the corresponding FIM elements can be calculated following~\eqref{eq:FIM_Compute}. Notably, the FIM size is reduced based on prior knowledge of $\tau_{\text{AR}}$ and $\delta_{\phi_\text{AR}^{ }}$. Finally, the error covariance bounds for $\delta_{t_{\text{AB}}}$, $\delta_{\phi_{\text{AB}}^{}}$, $\tau_{\text{AR}}$, and $\delta_{\phi_\text{AR}^{ }}$ (if applicable) are obtained following~\eqref{eq:each_error_bound_general}.

\section{CRLB Calculation of $\boldsymbol{\eta}$ from~\eqref{eq:unknownPar_unknown_positions}. } \label{appdices: CRLB_2}
The FIM of $\boldsymbol{\eta}$ from~\eqref{eq:unknownPar_unknown_positions} is calculated following~\eqref{eq:FIM_Compute} using $\boldsymbol{\mu} = (\boldsymbol{\mu}_{\text{AB-BA}}+\boldsymbol{\mu}_\text{R})$, defined after~\eqref{eq:obj_func_p_y_Bidirect_2path}. The derivatives are $ {\partial \boldsymbol{\mu}_{}^{\mathsf{}}}/{\partial {\tau_{\text{AB}}}} = \boldsymbol{D}_{1,1} \boldsymbol{\mu}_{\text{AB-BA}}$,  $ {\partial \boldsymbol{\mu}_{}^{\mathsf{}}}/{\partial \delta_{t_{\text{AB}}}} = \boldsymbol{D}_{1,2} \boldsymbol{\mu}$, $ 
  {\partial \boldsymbol{\mu}_{}^{\mathsf{}}}/{\partial \delta_{\phi_{\text{AB}}^{}}} = -j\boldsymbol{\mu} $, $
  {\partial \boldsymbol{\mu}_{}^{\mathsf{}}}/{\partial \tau_{\text{AR}}} = \boldsymbol{D}_{1,1} \boldsymbol{\mu}_\text{R}$, $
  {\partial \boldsymbol{\mu}_{}^{\mathsf{}}}/{\partial \delta_{\phi_{\text{AR}}^{}}} = -j\boldsymbol{\mu}_{\text{R}}$, $  
  {\partial \boldsymbol{\mu}_{}^{\mathsf{}}}/{\partial \beta_\text{AR}} = \boldsymbol{\mu}_\text{R}/\beta_\text{AR}$, $  
  {\partial \boldsymbol{\mu}_{}^{\mathsf{}}}/{\partial \beta_{\text{AB}}} = \boldsymbol{\mu}_{\text{AB-BA}}/\beta_{\text{AB}} 
$. Here $\boldsymbol{D}_{1,1}= \text{diag}([\boldsymbol{d_1},-\boldsymbol{d_1}])$, $\boldsymbol{D}_{1,2}= \text{diag}([\boldsymbol{d_1},-\boldsymbol{d_2}])$, and $\boldsymbol{d}_2=-j2\pi (f_c \boldsymbol{1}_{N} -\Delta_f \text{diag}([-(N-1)/2,\cdots,(N-1)/2]))$. These derivatives can be used following~\eqref{eq:FIM_Compute}. Notably, the FIM size are reduced based on prior knowledge of $\tau_{\text{AR}}$ and $\delta_{\phi_\text{AR}^{ }}$. Finally, the error covariance bounds for each parameter in~\eqref{eq:unknownPar_unknown_positions} are obtained following~\eqref{eq:each_error_bound_general}.

\bibliographystyle{IEEEtran}
\bibliography{references}
\end{document}